\documentclass[journal]{IEEEtran}
\usepackage[final]{graphicx}
\usepackage{subfigure}
\usepackage{float}
\usepackage{amsmath}
\usepackage{cases}
\usepackage{color}
\usepackage{colortbl}
\usepackage{algorithm}
\usepackage{algpseudocode}
\usepackage{amsmath}
\usepackage{amssymb}
\usepackage{booktabs}
\usepackage{setspace}
\usepackage{array}
\usepackage{rotating}
\setcounter{secnumdepth}{4}
\usepackage[normalem]{ulem}
\usepackage{bbding}

\renewcommand{\normalsize}{\fontsize{10}{12}\selectfont}

\newtheorem{theorem}{ \bf Theorem}
\newenvironment{proof}{{ \noindent \it Proof.}}{\hfill $\blacksquare$}

\makeatletter
\renewcommand{\maketag@@@}[1]{\hbox{\m@th\normalsize\normalfont#1}}%
\makeatother
\usepackage{stfloats}
\usepackage{cite}
\usepackage{makecell}
\usepackage{multirow}
\usepackage{hyperref}


\ifCLASSINFOpdf
\else
\fi

\hyphenation{op-tical net-works semi-conduc-tor}
\usepackage{caption}
\usepackage{mathtools}

\begin{document}
\title{Multipath Component-Enhanced Signal Processing for Integrated Sensing and Communication Systems}
\author{Haotian Liu,
Zhiqing Wei, 
Xiyang Wang,
Huici Wu,
Fan Liu,
Xingwang Li,
Zhiyong Feng

\thanks{H. Liu, Z. Wei, X. Wang, H. Wu, Z. Feng are with the Beijing University of Posts and Telecommunications, Beijing 100876, China (emails: \{haotian\_liu; weizhiqing; wangxiyang; dailywu; fengzy\}@bupt.edu.cn). \textit{Corresponding authors: Zhiqing Wei, Zhiyong Feng.}
}
\thanks{X. Li is with the School of Physics and Electronic Information Engineering, Henan Polytechnic University, Jiaozuo 454000, China (e-mail: lixingwangbupt@gmail.com).}

\thanks{F. Liu is with School of System Design and Intelligent Manufacturing, Southern University
of Science and Technology, Shenzhen 518055, China (e-mail: f.liu@ieee.org).}
}

\maketitle

\begin{abstract} 
Integrated sensing and communication (ISAC) has gained traction in academia and industry.
Recently, multipath components (MPCs), as a type of spatial resource,
have the potential to improve the sensing performance in ISAC systems,
especially in richly scattering environments.
In this paper, we propose to leverage MPC and Khatri-Rao space-time (KRST) code within a single ISAC system
to realize high-accuracy sensing for multiple dynamic targets and multi-user communication.
Specifically, we propose a novel MPC-enhanced sensing processing scheme with symbol-level fusion,
referred to as the ``SL-MPS'' scheme,
to achieve high-accuracy localization of multiple dynamic targets and
empower the single ISAC system with a new capability of
absolute velocity estimation for multiple targets
with a single sensing attempt.
Furthermore, the KRST code is applied to flexibly balance
communication and sensing performance in richly scattering environments.
To evaluate the contribution of MPCs,
the closed-form Cramér-Rao lower bounds (CRLBs) of location and absolute velocity estimation are derived.
Simulation results illustrate that the proposed SL-MPS scheme
is more robust and accurate in localization and absolute velocity estimation
compared with the existing state-of-the-art schemes.
\end{abstract}
\begin{IEEEkeywords}
Cramér-Rao lower bound (CRLB),
integrated sensing and communication (ISAC),
multi-resource cooperative ISAC,
multipath component (MPC), 
tensor decomposition.
\end{IEEEkeywords}

\IEEEpeerreviewmaketitle

\section{Introduction}

\subsection{Background and Motivations}
\IEEEPARstart{T}{he} envisioned next-generation wireless networks aim to endow mobile communication systems with
integrated sensing and communication (ISAC) capability,
promoting the deep development of emerging application scenarios 
such as the intelligent transportation and low-altitude economy~\cite{Fanliu_2022,wei2024deep,zhang2025,liu2023isac}.
To further enhance the sensing performance of ISAC systems, 
multi-resource cooperative sensing has been widely applied 
in the emerging scenarios~\cite{wei2024deep,liu2024carrier}. 
However, dense multipath components (MPCs) caused by buildings or traffic clusters 
in urban environments severely interfere with line-of-sight (LoS) echo signals,
significantly degrading sensing accuracy. 
Therefore, mitigating the impact of MPCs on high-precision sensing is a critical challenge for ISAC 
systems~\cite{cai2023,laoudias2018survey,gennarelli2014multipath}.

Currently, many approaches focus on detecting and eliminating the effects of MPCs,
such as the maximum likelihood estimation (MLE)~\cite{xie2017mitigating} 
and the CLEAN method~\cite{cramer2002evaluation}.
On top of that, existing works also aim to mitigate the impact of MPC signals on positioning,
utilizing techniques like machine learning-based MPC mitigation~\cite{pan2022machine} and
differential global navigation satellite system-based MPC error elimination~\cite{lee2023seamless}.
While the above methods are sophisticatedly designed, they fall short in maintaining accuracy when there is no LoS path.
To further enhance the sensing performance, one may follow the philosophy of ``turning enemies into allies'' through exploiting the MPCs as useful spatial-domain resources~\cite{wei2024deep}, rather than completely eliminating them.


\subsection{Related Work}
The researches on MPC utilization mainly focus on MPC exploitation radar (MER) and wireless localization, which we review in detail as below.

\textbf{\textit{MER:}} 
The principle of MER is to use mirror-reflected MPC signals
generated by electromagnetic waves on building surfaces,
with prior knowledge of the building layout,
to achieve non-line-of-sight (NLoS) positioning of targets~\cite{lingjiang2023overview,zetik2012uwb,johansson2016positioning,chen2022non,wei2021nonline}. 
Zetik \textit{et al.} in \cite{zetik2012uwb} used single-reflection echoes and
geometric relationships from walls to estimate the one-dimensional coordinates of NLoS targets.
Johansson \textit{et al.} in \cite{johansson2016positioning}
extended the localization of NLoS targets to two-dimensional (2D) cases,
employed narrow-beam scanning radar and utilized multiple mirror reflections for 2D localization of moving targets.
In addition to utilizing MPC's distance information, 
combining angle-of-arrival (AoA) information can improve localization accuracy. 
Chen \textit{et al.} in \cite{chen2022non} introduced an MPC recognition method 
for matching MPCs and multiple targets, 
achieving high-accuracy NLoS multi-target localization.
Beyond estimating 2D coordinates, 
acquiring three-dimensional (3D) height information of targets is also crucial. 
Based on the 3D imaging model, 
Wei \textit{et al.} in \cite{wei2021nonline} reconstructed a rough 3D image from 2D radar images, 
and proposed an automatic focusing algorithm 
for high-accuracy 3D image reconstruction. 

\textbf{\textit{Wireless localization:}} 
As a high-precision positioning technology, 
wireless localization also considers 
``turning enemies into allies'' with respect to MPCs~\cite{witrisal2016high,shen2009use}. 
Ultra-wideband (UWB) technologies, thanks to their easy separability in mixed MPCs, 
are recognized as one of the promising approachs for MPC-assisted wireless
localization~\cite{witrisal2016high,shen2009use,meissner2010uwb,meissner2013accurate}. 
Based on UWB signals, Shen \textit{et al.} in \cite{shen2009use} 
demonstrated the contribution of MPCs on wireless localization through the derivation of the squared position error bound. 
Moreover, Meissner \textit{et al.} in \cite{meissner2010uwb} introduced a method for target localization
using the distance information from MPCs. 
By combining the target distance information carried in the channel impulse response with MLE,
target position information is obtained. 
Based on the previous work in \cite{meissner2010uwb}, 
Meissner \textit{et al.} in \cite{meissner2013accurate} extended
the single-node indoor positioning algorithm to a multi-node collaborative scenario. 
In addition to UWB signals, 
multiple input multiple output-orthogonal frequency division multiplexing (MIMO-OFDM) signals 
are also applied to distinguish and utilize MPCs through tensor decomposition.
Zhao \textit{et al.} in \cite{zhao2021multipath} proposed an MPC separation method based on tensor decomposition, 
using the time of arrival and AoA information of LoS MPCs for single-target localization. 
However, \cite{zhao2021multipath} did not utilize the sensing information of NLoS MPCs.
To address this dilemma, Gong \textit{et al.} in \cite{gong2022multipath} proposed an MPCs-assisted localization method. 
By establishing virtual anchors (VAs), 
joint utilization of both LoS and NLoS MPC information is enabled for localization. 
This method fuses the target parameters carried by MPCs, which may be treated as the data-level fusion of sensing information.

Overall, the above works provide guidance and feasibility analysis 
for the sensing signal processing in MPCs-aided ISAC systems. 
However, the multipath utilization schemes based on MER and UWB 
have the following shortcomings when applied to mobile communication systems.
\begin{itemize}
    \item \textbf{Dense MPC separation:} 
    Dense MPCs increase multipath separation time,
    hindering existing schemes from meeting the real-time demands
    of mobile communication systems.
    \item \textbf{Exploitation of the spatial-temporal-frequency domains}:
    The spatial-temporal-frequency domains capture higher-order MPC features
    and multidimensional gains unique to MIMO-OFDM signals.
    However, existing schemes failed to fully exploit this advantage.
\end{itemize}
It is worth mentioning that the method in~\cite{gong2022multipath}
represents the state-of-the-art
in MPC-assisted research under MIMO-OFDM signals.
However, this method is not suitable for the scenario with multiple dynamic targets.
Furthermore, these techniques typically adopted the conventional data-level fusion principle,
which relies on the geometric relationship between MPC information and targets for localization,
without leveraging the complex spatial characteristics of
MIMO-OFDM signals to further exploit the value of MPC information.

\subsection{Our Contributions}
This paper proposes a novel MPC-enhanced sensing processing scheme with symbol-level fusion,
referred to as the ``SL-MPS'' scheme.
On this basis, a Khatri-Rao spacetime (KRST) code, leveraging the sparsity of its coding matrix and its ability to resist multipath interference, is applied to flexibly balance communication and sensing performance in a richly scattering environment. 
Compared to existing data-level fusion schemes in~\cite{zhao2021multipath} -\cite{gong2022multipath}, 
the proposed SL-MPS scheme offers enhanced sensing performance, 
as symbol-level fusion capitalizes on the intricate distribution patterns of 
MPC information within the complex space.
The main contributions of this paper are summarized as follows.
\begin{itemize}
    \item \textbf{MPC-enabled sensing enhancement:} 
    We consider the signal processing of leveraging MPC information and KRST code to enhance sensing performance in multiple dynamic target and multi-user scenario. 
    \item \textbf{Dense MPC separation algorithm:}
    Based on the sparse redundancy introduced by KRST coding, we propose a transform-predict alternating least squares (TP-ALS) algorithm, which initializes the solution space using higher-order singular value decomposition (HOSVD), imposes constraints on matrix factors through physical model mapping, and employs adaptive moment estimation to dynamically adjust the iteration steps, achieving accurate and fast separation of dense MPCs. 
    \item \textbf{A novel MPC-enabled sensing processing scheme:} 
    An MPC-enhanced sensing processing scheme with symbol-level fusion is proposed, 
    referred to as the SL-MPS scheme.
    This scheme turns harmful MPCs into valuable sensing information and 
    combines the developed symbol-level fusion operation to achieve 
    enhanced single-station localization performance while empowering single ISAC system with the capability of
    absolute velocity estimation for multiple targets with a single sensing attempt.
    \item \textbf{CRLB analysis of MPC-enabled sensing:} 
    The CRLBs of localization and absolute velocity estimation with MPC information are derived, 
    revealing the contribution of MPC information to moving target localization and absolute velocity estimation.
\end{itemize}

This paper significantly extends the method proposed in our conference paper~\cite{liu2024multipath}.
We further consider the practical scenarios with multiple dynamic targets
and multiple user equipments (UEs).
This introduces the following new challenges: 
\begin{enumerate}
    \item The accuracy and speed of MPC separation is difficult to be improved due to dense MPCs.
    \item The association of targets, reflectors and MPCs is difficult in this scenario.
    \item The application of Doppler information from MPCs in sensing dynamic targets is challenging.
\end{enumerate}
To this end, we propose a TP-ALS algorithm and matching operation for multi-dynamic target separation and association.
Notably, the single ISAC system is empowered with a new capability of absolute velocity estimation 
for multiple targets with a single sensing attempt.
Additionally, we provide the detailed closed-form CRLBs 
of MPC-enabled location and absolute velocity estimations.
The proposed schemes are evaluated by simulation results,
including the trade-off analysis of the dual-function performance 
and the performance evaluation of the proposed sensing processing scheme.

The remainder of this paper is structured as follows. 
Section~\ref{se2} outlines the system model. 
In Section~\ref{se3}, we present the SL-MPS scheme. 
Section~\ref{se4} derives the CRLBs of localization and absolute velocity estimation with MPC information. 
Section \ref{se5} details the simulation results, and Section \ref{se6} summarizes this paper.

\textit{Notations:} 
$\{\cdot\}$ typically stands for a set of various index values. 
The bold black letters represent matrices or vectors. 
$\left \langle \cdot \right \rangle $ denotes extracting data from a cell $(\alpha, \beta)$.
$\left[\!\left[\cdot \right]\!\right]$ is the Kruskal operator.
$\mathbb{C}$ and $\mathbb{R}$ denote the set of complex and real numbers, respectively. 
$\left[\cdot\right]^{\text{T}}$, 
$\left[\cdot\right]^{\text{H}}$, $\left[\cdot\right]^{-1}$, 
$\left[\cdot\right]^{\dagger}$,
$|\cdot|$, $\text{diag}\{\cdot\}$, 
$\mathrm{real}(\cdot)$, and $\propto $ are the transpose, conjugate transpose, inverse, pseudo-inverse, absolute, diagonal, real part, and proportionality operators, respectively. 
$\text{vec}(\cdot)$ is the vector operator. 
$\equiv$ and $\approx $ are identity operator and approximation operator, respectively. 
$\bullet$, $\circ $, and $\odot$ denote the inner, outer, and Khatri–Rao products, respectively. 
$\left\|\cdot\right\|_{\text{F}}^2$ and $\|\cdot\|_\infty$ are the Frobenius and infinity norm, respectively. 
A complex Gaussian random variable $\mathbf{u}$ with mean $\mu_u$ and variance $\sigma_u^2$ is denoted by $\mathbf{u} \sim \mathcal{CN}\left(\mu_u,\sigma_u^2\right)$.

\section{System Model}\label{se2}
As shown in Fig. \ref{fig1}, 
we consider an ISAC-enabled base station (BS) for multiple dynamic target sensing and multi-user downlink (DL) communication in a richly scattering environment. 
The ISAC BS is deployed with uniform linear arrays with the antenna spacing being $d_r$, 
where $N_\text{T}$ transmit antennas and $N_\text{R}$ receive antennas are time-shared for communication and sensing~\cite{liu2024carrier}.
For multi-user DL communication, 
we assume that there are $U$ UEs, 
each equipped with $N_\text{U}$ receive antennas. 
For multiple dynamic target sensing,
we assume that there are $I$ targets and that NLoS MPCs and LoS MPCs exist between the BS and targets. 
It is noted that our work is also applicable when the LoS MPCs do not exist, where the NLoS MPC refers to a symmetric two-way path in which the transmitted signal reaches the target via a single reflection and returns along the same path in reverse.
Given that the BS and reflectors are stationary, we assume the BS has prior knowledge of the positions and angles of the $L$ reflectors, 
whose surfaces yield only specular reflections of the ISAC signal~\cite{sen2010adaptive}. 
Due to the extremely low energy of multiple reflections, 
it is common in the literature to consider only a single reflection~\cite{sen2010adaptive,gong2022multipath,witrisal2016high}. 
Besides, we also consider the presence of clutter, such as MPC where the signal is reflected through an unknown moving scatterer, which is called temporary MPC in the paper.
The total number of various MPCs received by BS can be determined using the minimum description length (MDL) method~\cite{yokota2016robust}, 
which is assumed to be $I\left(L+1+K_s\right)$, 
with each target corresponding to one LoS MPC, $L$ NLoS MPCs, and $K_s$ temporary MPCs.
\begin{figure}
    \centering
\includegraphics[width=0.48\textwidth]{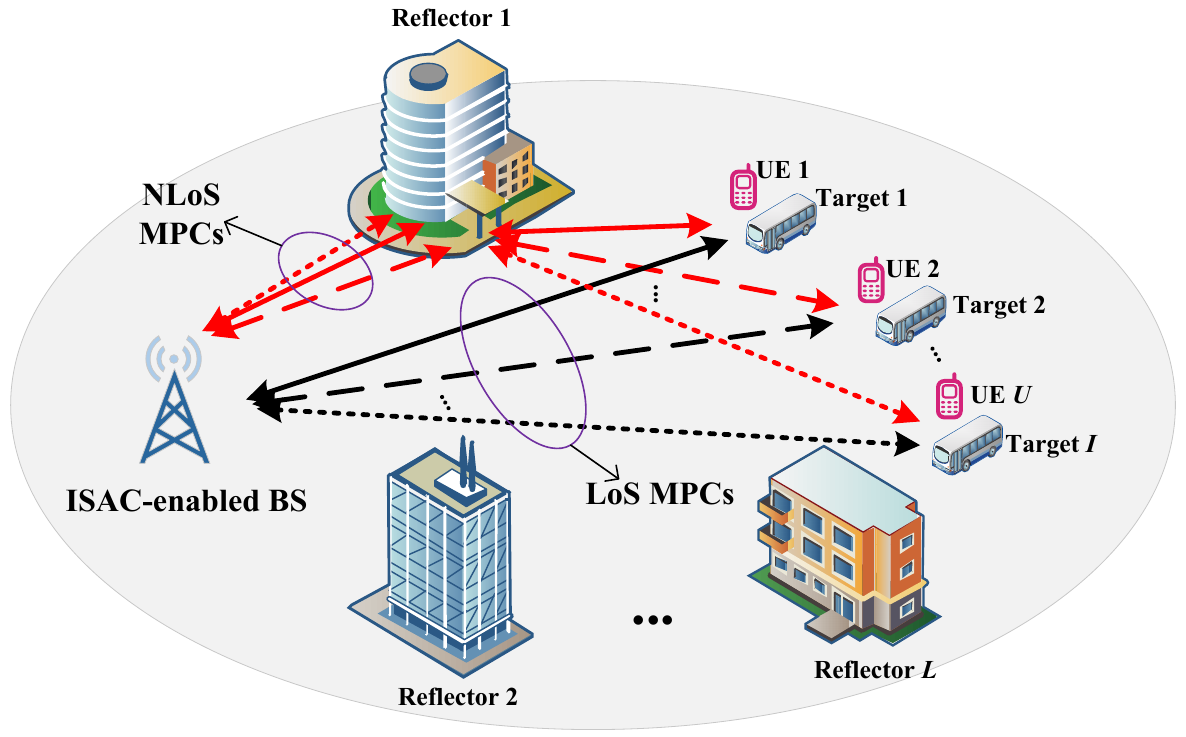}
    \caption{MPCs-enhanced MIMO-OFDM ISAC BS for multi-user communication and multiple dynamic target sensing scenario}
    \label{fig1}
\end{figure}

\subsection{KRST-MIMO-OFDM ISAC Transmitter}
In a richly scattering environment, KRST leverages the Khatri-Rao product structure to reduce the correlation of multipath components (MPCs) for easier separation while suppressing multipath fading through diversity combining. Therefore, we consider the DL communication and sensing with $M$ KRST code blocks.
For the $u$-th UE, 
the modulated symbol matrix associated with the $n$-th subcarrier is $\mathbf{X}_n^u=\left[\mathbf{x}_n^{1,u},\cdots,\mathbf{x}_n^{m,u},\cdots,\mathbf{x}_n^{M,u}\right]\in\mathbb{C}^{N_\text{S}\times M}$, 
where $N_\text{S}$ is the number of parallel modulated symbol streams, 
and $n\in\{0,1,\cdots,N_\text{c}\}$ is the index of subcarrier with the total number being $N_\text{c}$. 
$\mathbf{X}_n^u$ is first precoded by $\mathbf{W}_n^u\in\mathbb{C}^{N_\text{T}\times N_\text{S}}$~\cite{Lu2024random}, 
and the overlapped precoded modulated symbol matrix is
\begin{equation}\label{eq1}
\mathbf{S}_n= \sum_{u=1}^{U}\mathbf{W}_n^u\mathbf{X}_n^u 
=\left[\mathbf{s}_n^1,\cdots,\mathbf{s}_n^m,\cdots,\mathbf{s}_n^M\right]\in\mathbb{C}^{N_\text{T}\times M}.
\end{equation}

Then, for the $m$-th code block, 
the coding symbol matrix encoded by KRST coding scheme~\cite{sidiropoulos2002khatri} is $\text{diag}\left(\Xi\mathbf{s}_n^m\right)\mathbf{C}_0^\text{T}$, 
where $\Xi=\frac{1}{\sqrt{N_\text{T}}}\tilde{\boldsymbol{F}}_{n}\text{diag}\left\{1,e^{j\pi/2N_\text{T}},\cdots,e^{j\pi(N_\text{T}-1)/2N_\text{T}}\right\}\in\mathbb{C}^{N_\text{T} \times N_\text{T}}$ 
is a constellation rotation matrix with $\tilde{\boldsymbol{F}}_{n}\in\mathbb{C}^{N_\text{T} \times N_\text{T}}$ being an inverse discrete Fourier transform (IDFT) matrix; 
$\mathbf{C}_0\in\mathbb{C}^{K\times N_\text{T}}$ is a scaled semi-unitary matrix with $\mathbf{C}_0^\text{H}\mathbf{C}_0=\mathbf{I}_{N_\text{T}}\left(K\le N_\text{T}\right)$ 
and $K$ being the code length (number of slots in one code).

\subsection{Communication Signal Model}
In the richly scattering environment,
the KRST code is introduced to combat multipath fading while enabling flexible adjustment of full rate or 
full diversity gain~\cite{sidiropoulos2002khatri}.

\subsubsection{Channel model}
Since a LoS MPC exists between BS and the $u$-th UE, 
we assume that the channel matrix $\mathbf{H}_C^u\in\mathbb{C}^{N_\text{U}\times N_\text{T}}$ associated with the $u$-th UE 
is an uncorrelated Rician fading channel and its elements can be expressed as~\cite{sanguinetti2018theoretical}
\begin{equation}\label{eq2} \left[\mathbf{H}_C^u\right]_{\omega,p}=\sqrt{\beta\frac{\kappa}{1+\kappa}}h_{\omega,p}^{Ri}+\sqrt{\beta\frac{1}{1+\kappa}}h_{\omega,p}^{Ra},
\end{equation}
where $\omega\in\{0,1,\cdots,N_\text{U}-1\}$ and $p\in\{0,1,\cdots,N_\text{T}-1\}$ 
are the indices of the receive UE and transmit BS antennas, respectively; 
$\beta = \frac{(4\pi)^2 d^2}{\lambda^2}$ is pathloss with $\lambda$ being wavelength 
and $d$ being the distance from BS to UE;
$\kappa \ge 0$ is the Rician factor; 
$h_{\omega,p}^{Ri}$ is a deterministic
vector that associates with the LoS path and $h_{\omega,p}^{Ra}\sim\mathcal{CN}\left(0,1
\right)$~\cite{sanguinetti2018theoretical}. 

\subsubsection{Received communication signal}
If the channel is constant for the duration of code length $K$, 
the received communication signal at 
the $u$-th UE in the $n$-th subcarrier during the $m$-th KRST code block is denoted by
\begin{equation} \label{eq3}
\mathbf{Y}_{n,m}^{C,u}=\mathbf{H}_C^u\text{diag}\left(\Xi\mathbf{s}_n^m\right)\mathbf{C}_0^\text{T}+\mathbf{Z}_{n,m}^{C,u},
\end{equation}
where $\mathbf{Y}_{n,m}^{C,u}\in\mathbb{C}^{N_\text{U}\times K}$ and 
$\mathbf{Z}_{n,m}^{C,u}\in\mathbb{C}^{N_\text{U}\times K}$ is additive white Gaussian noise (AWGN) matrix. 
The KRST code enables communication data recovery without requiring channel state information, 
which is realized by the blind-KRST decoding technique~\cite{sidiropoulos2002khatri}, 
and the received modulated symbol is denoted by $\tilde{\mathbf{s}}_n^m$. 
With the known $\mathbf{W}_n^u$ and $\tilde{\mathbf{s}}_n^m$, 
the received modulated symbol of the $u$-th UE is $\hat{\mathbf{X}}_n^u$. 
With the KRST code, 
the transmission rate in one subcarrier is $\left(\frac{N_\text{T}}{K}\right)\log_2\left(\varphi \right)$ bit/code length, 
where $\varphi$ is the modulation order.

\subsection{Sensing Signal Model}
For the sensing in the richly scattering environment, 
based on the sparsity of KRST codes, 
we aim to turn MPC information into valuable resources 
to enhance sensing performance in multiple dynamic target sensing.

In the $M$ KRST code blocks sensing duration, 
for $I$ far-field point targets moving with the $i\in\{1,2,\cdots,I\}$-th target's absolute velocity 
being $\vec{v}_i$ and coordinate being $\left(x_\text{ta}^i,y_\text{ta}^i\right)$, 
the echo signal in BS side of the $n$-th subcarrier during the $m$-th KRST code block is~\cite{liu2024carrier}
\begin{equation}  \label{eq4}
\mathbf{Y}_{n,m}^S=\sum_{i=1}^I\sum_{l=0}^{L+K_s}\left[
\begin{array}{l}
 \sqrt{P_{\text{t},i}^l}\alpha_i^l e^{j2\pi f_{\text{D},i}^l mT}e^{-j2\pi n\Delta f \tau_i^l}  \\
\times\textbf{a}_\text{r}\left(\theta_i^l\right)\boldsymbol{\chi}\text{diag}\left(\Xi\mathbf{s}_n^m\right)\mathbf{C}_0^\text{T}
\end{array}\right]+ \mathbf{Z}_{n,m}^S,
\end{equation}
where $\mathbf{Y}_{n,m}^S\in \mathbb{C}^{N_\text{R}\times K}$, 
and $l\in\{0,1,\cdots, L+K_a\}$ denote the index of MPCs; 
$l=0$ is the LoS MPC for the $i$-th target; 
$\alpha_i^l$ and $P_{\text{t},i}^l$ are the attenuation and equivalent transmit power of the $l$-th MPC, respectively; 
$\sum_{i=1}^I\sum_{l=0}^{L}P_{\text{t},i}^l$ is the total transmit power;
$f_{\text{D},i}^l=\frac{2v_i^lf_\text{c}}{c_0}$ is the Doppler frequency shift, 
where $v_i^l$ is the relative velocity with respect to the $l$-th MPC for the $i$-th target and $c_0$ is the speed of light;
$\tau_i^l=\frac{2R_i^l}{c_0}$ is the delay with $R_i^l$ being the one-way distance of the $l$-th MPC for the $i$-th target; 
$T$ is total OFDM symbol duration;
$\theta_i^l$ is the AoA of the $l$-th MPC for the $i$-th target arrives at BS and
$\mathbf{Z}_{n,m}^S \sim \mathcal{CN}(0,\sigma^2)$ is an AWGN matrix; 
$\boldsymbol{\chi}_i^l\in\mathbb{R}^{1\times N_\text{T}}$ is the transmit beamforming gain vector with $\boldsymbol{\chi}_i^l(p) = \textbf{a}_\text{t}^\text{T}\left(\theta_i^l\right)\mathbf{w}_{\text{t},i}^l$ and $\mathbf{w}_{\text{t},i}^l\in\mathbb{C}^{N_\text{T}\times 1}$ being transmit beamforming vector obtained based on beam scanning; 
$\textbf{a}_\text{r}(\cdot)$ and $\textbf{a}_\text{t}(\cdot)$ are the receive and transmit steering vectors, 
expressed in (\ref{eq5}) and (\ref{eq6}), respectively~\cite{liu2024carrier}.
\begin{equation}\label{eq5}
    \textbf{a}_\text{r}(\cdot)=\left[1, \cdots, e^{j2\pi\frac{d_r}{\lambda} k\sin(\cdot)}, \cdots, e^{j2\pi \frac{d_r}{\lambda}(N_\text{R}-1)\sin(\cdot)}\right]^\text{T},
\end{equation}
\begin{equation}\label{eq6}
\textbf{a}_\text{t}(\cdot)=\left[1, \cdots, e^{j2\pi\frac{d_r}{\lambda} p\sin(\cdot)}, \cdots, e^{j2\pi\frac{d_r}{\lambda} (N_\text{T}-1)\sin(\cdot)}\right]^\text{T},
\end{equation}
where $k\in\{0, 1, \cdots, N_\text{R}-1\}$ is the index of receive BS antennas.  
According to Eq.~\eqref{eq4}, $I(L+1+K_s)$ MPCs carrying target information show promise for sensing, yet the instability of temporary MPCs leads us to presently discount their use.

\section{MPC-Enhanced Sensing Signal Processing with Symbol-Level Fusion}\label{se3}
As shown in Fig.~\ref{fig2}, 
the proposed SL-MPS scheme, involves two stages: 
MPC separation stage and symbol-level fusion stage. 
In the MPC separation stage, the separated problem is solved by a proposed TP-ALS algorithm.
In the symbol-level fusion stage, 
the VAs are constructed based on the prior environment information, 
and the one-to-one relationship between MPCs, VAs, and targets 
is obtained by the proposed matching operation, 
which ensures the coupling of sensing information, 
while achieving the decoupling of the multi-target sensing problem. 
Then, the separated sensing information is fused using symbol-level fusion
to estimate the location and absolute velocity of $I$ targets.
\begin{figure*}
    \centering
    \includegraphics[width=0.80\textwidth]{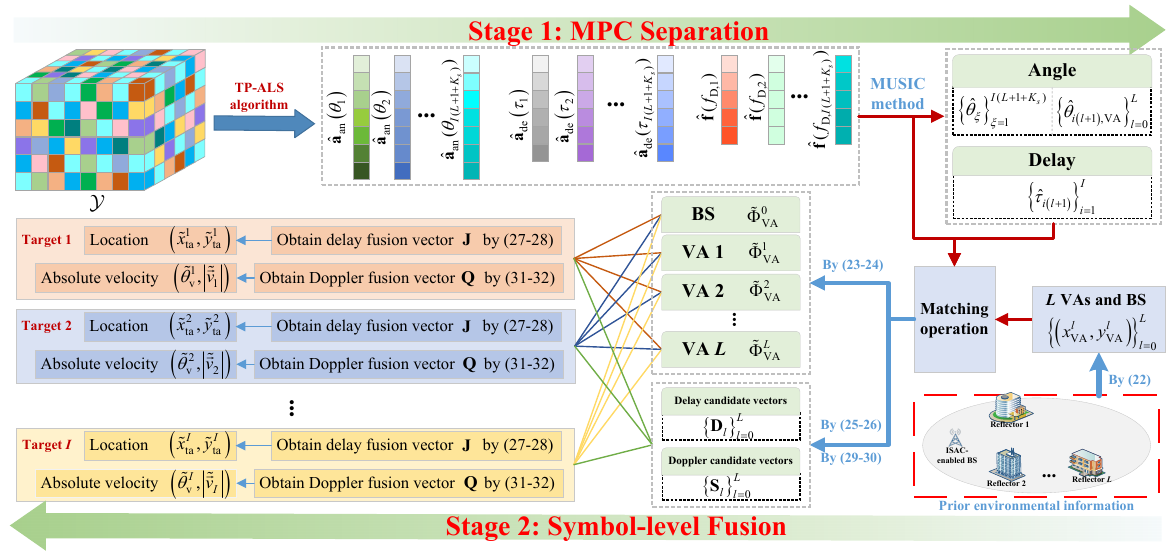}
    \caption{The diagram of the proposed SL-MPS scheme}
    \label{fig2}
\end{figure*}

\subsection{MPC Separation Stage}\label{sec3-a}

Considering the high-dimensional characteristic of the received echo signal 
and the uncorrelated characteristic of the MPC information, 
the tensor decomposition can be applied to separate the mixed MPCs and maintain the coupling of delay, 
angle, and Doppler information of the target~\cite{kolda2009tensor}.
\subsubsection{Tensor-based ISAC echo signal model}
First, we need to cancel the transmission symbol 
to mitigate the influence of its randomness on sensing. 
We multiply $\mathbf{Y}_{n,m}^S$ by $\text{diag}\left[\mathbf{e}\left(\text{diag}\left(\Xi\mathbf{s}_n^m\right)\mathbf{C}_0^\text{T}\right)^*\right]$ 
on the right to get
\begin{equation}
\begin{aligned}
\tilde{\mathbf{Y}}_{n,m}^S& = \mathbf{Y}_{n,m}^S \text{diag}\left[\mathbf{e}\left(\text{diag}\left(\Xi\mathbf{s}_n^m\right)\mathbf{C}_0^\text{T}\right)^*\right] \\ & =\sum_{i=1}^I\sum_{l=0}^{L+K_s}\left[
\sqrt{P_{\text{t},i}^l}\alpha_i^l e^{j2\pi f_{\text{D},i}^l mT}e^{-j2\pi n\Delta f \tau_i^l}\textbf{a}_\text{r}\left(\theta_i^l\right)\tilde{\boldsymbol{\chi}}
\right] \\ & \quad+ \mathbf{Z}_{n,m}^S\text{diag}\left[\mathbf{e}\left(\text{diag}\left(\Xi\mathbf{s}_n^m\right)\mathbf{C}_0^\text{T}\right)^*\right],   
\end{aligned} 
\end{equation}
where $\mathbf{e}=[1_1,1_2,\cdots,1_{N_\text{T}}]\in\mathbb{R}^{1\times N_\text{T}}$, 
and $\tilde{\mathbf{\chi}}\in\mathbb{R}^{1\times K}$ is a gain factor generated by canceling the transmitted symbol.
The phase information of each column of $\tilde{\mathbf{Y}}_{n,m}^S$ is identical, 
and we can coherently accumulate it to obtain a coding gain $K$. 
The accumulated echo signal of the $n$-th subcarrier during the $m$-th KRST code block 
is $\hat{\mathbf{Y}}_{n,m}^S=\frac{1}{K}\sum_{\varepsilon=1}^K\tilde{\mathbf{Y}}_{n,m}^S\left[:,\varepsilon\right]$. 
Then, the third-order tensor form of the accumulated echo signal in $N_\text{c}$ subcarriers 
during $M$ code block can be expressed as~\cite{gong2022multipath}
\begin{equation}\label{eq7}
\begin{aligned}
     \mathcal{Y}&\approx \left[\!\left[ \mathbf{A},\mathbf{B},\mathbf{C} \right]\!\right] + \mathcal{Z}\\& \equiv \sum_{i=1}^{I}\sum_{l=0}^{L+K_s}\mathbf{a}_{\text{an}}\left(\theta_i^l\right)\circ  \mathbf{a}_{\text{de}}\left(\tau_i^l\right)\circ   \mathbf{f}\left(f_{\text{D},i}^l\right)+ \mathcal{Z},
\end{aligned}
\end{equation}
where $\mathcal{Y} \in \mathbb{C}^{N_\text{R}\times N_\text{c}\times M}$, 
and $\mathcal{Z}$ is an AWGN tensor with the element $\mathcal{Z}(k,n,m)$ following $\mathcal{CN}(0, \sigma^2)$; 
The factor matrices $\mathbf{A}=\left[\mathbf{a}_{\text{an}}\left(\theta_1^0\right),\cdots, \mathbf{a}_{\text{an}}\left(\theta_I^{L+K_s}\right)\right]\in \mathbb{C}^{N_\text{R}\times[I(L+1+K_s)]}$, $\mathbf{B}=\left[\mathbf{a}_{\text{de}}\left(\tau_1^0\right),\cdots, \mathbf{a}_{\text{de}}\left(\tau_I^{L+K_s}\right)\right]\in \mathbb{C}^{N_\text{c}\times[I(L+1+K_s)]}$, and $\mathbf{C}=\left[\mathbf{f}\left(f_{\text{D},1}^0\right),\cdots, \mathbf{f}\left(f_{\text{D},I}^{L+K_s}\right)\right]\in \mathbb{C}^{M\times[I(L+1+K_s)]}$; 
$\mathbf{a}_{\text{an}}\left(\theta_i^l\right)$, $\mathbf{a}_{\text{de}}\left(\tau_i^l\right)$, and $\mathbf{f}\left(f_{\text{D},i}^l\right)$ are the vectors of $l$-th MPC for the $i$-th target, expressed in (\ref{eq8}), (\ref{eq9}), and (\ref{eq10}), respectively, in the form of
\begin{equation}\label{eq8}
\mathbf{a}_{\text{an}}\left(\theta_i^l\right)=\textbf{a}_\text{r}\left(\theta_i^l\right)\varpi, 
\end{equation}
\begin{equation}\label{eq9}
\mathbf{a}_{\text{de}}\left(\tau_i^l\right)=\left[1, e^{-j2\pi\Delta f\tau_i^l}, \cdots, e^{-j2\pi(N_\text{c}-1)\Delta f\tau_i^l}\right]^\text{T}, 
\end{equation}
\begin{equation}\label{eq10}
 \mathbf{f}\left(f_{\text{D},i}^l\right)= \left[1, e^{j2\pi f_{\text{D},i}^lT}, \cdots, e^{j2\pi(M-1) f_{\text{D},i}^lT}\right]^\text{T},
\end{equation}
where $\varpi$ is a complex factor.
Then, we present the following remark to prove the uniqueness of decomposition.

\textit{Remark 1: According to Kruskal's sufficient condition for uniqueness, 
$\text{rank}(\mathbf{A})+\text{rank}(\mathbf{B})+\text{rank}(\mathbf{C}) \ge 2I(L+1+K_s)+2$ 
should be satisfied~\cite{kolda2009tensor}. 
Since the angle, delay, and Doppler characteristics of the MPCs for $I$ targets are unique and
$\left\{\mathbf{A},\mathbf{B},\mathbf{C}\right\}$ are full-rank Vandermonde matrices, $\mathcal{Y}$ has a unique tensor decomposition.}

\subsubsection{Proposed {TP-ALS} algorithm}\label{sec3-2}
The mode-1, mode-2, and mode-3 unfoldings $\left[\mathcal{Y}\right]_{(1)}\in\mathbb{C}^{N_\mathrm{R} \times N_\mathrm{c}M}$, $\left[\mathcal{Y}\right]_{(2)}\in\mathbb{C}^{N_\mathrm{c} \times N_\mathrm{R}M}$, and $\left[\mathcal{Y}\right]_{(3)}\in\mathbb{C}^{M  \times N_\mathrm{R}N_\mathrm{c}}$ of $\mathcal{Y}$ can be expressed in \eqref{add1}, \eqref{add2}, and \eqref{add3}, respectively,
where $\mathcal{Y}_{::m+1}\in\mathbb{C}^{N_\mathrm{R} \times N_\mathrm{c}}$ is the $(m+1)$-th frontal slice in the third dimensional, $\mathcal{Y}_{k+1::}\in\mathbb{C}^{N_\mathrm{c} \times M}$ is the $(k+1)$-th horizontal slice in the first dimensional, and $\mathcal{Y}_{:n+1:}\in\mathbb{C}^{N_\mathrm{R} \times M}$ is the $(n+1)$-th lateral slice in the second dimensional.
\begin{equation}\label{add1}
\begin{aligned}
\left[\mathcal{Y}\right]_{(1)}=& \left[\mathcal{Y}_{::1},\mathcal{Y}_{::2},\cdots,\mathcal{Y}_{::M}\right]= \mathbf{A}\left(\mathbf{C}\odot \mathbf{B}\right)^{\text{T}},     
\end{aligned}
\end{equation}
\begin{equation}\label{add2}
\left[\mathcal{Y}\right]_{(2)}=\left[\mathcal{Y}_{1::},\mathcal{Y}_{2::},\cdots,\mathcal{Y}_{N_\mathrm{R}::}\right]=\mathbf{B}\left(\mathbf{C}\odot \mathbf{A}\right)^{\text{T}},     
\end{equation}
\begin{equation}\label{add3}
\begin{aligned}
\left[\mathcal{Y}\right]_{(3)}=& \left[\mathcal{Y}_{:1:}^\mathrm{T},\mathcal{Y}_{:2:}^\mathrm{T},\cdots,\mathcal{Y}_{:N_\mathrm{c}:}^\mathrm{T}\right] =\mathbf{C}\left(\mathbf{B}\odot \mathbf{A}\right)^{\text{T}}.   
\end{aligned}
\end{equation}

In TP-ALS algorithm, the HOSVD leverages SVD decomposition to extract high-order correlations, providing low-rank initial guesses. The physical model mapping narrows the search space through bidirectional transformations between matrix and sparse domains, while adaptive moment estimation balances convergence speed and stability by learning from historical gradients.

\textit{Step 1:} Perform HOSVD on tensor $\mathcal{Y}$ to obtain left singular matrices $\mathbf{U}_1\in\mathbb{C}^{N_\mathrm{R} \times N_\mathrm{R}}$, $\mathbf{U}_2\in\mathbb{C}^{N_\mathrm{c} \times N_\mathrm{c}}$, and $\mathbf{U}_3\in\mathbb{C}^{M \times M}$. Then, the initial factor matrices $\mathbf{\bar{A}}^0 = \mathbf{U}_1[:,1:(I(L+1+K_s))]\in\mathbb{C}^{N_\mathrm{R} \times (I(L+1+K_s))}$, 
$\mathbf{\bar{B}}^0 = \mathbf{U}_2[:,1:(I(L+1+K_s))]\in\mathbb{C}^{N_\mathrm{C} \times (I(L+1+K_s))}$, and
$\mathbf{\bar{C}}^0 = \mathbf{U}_3[:,1:(I(L+1+K_s))]\in\mathbb{C}^{M \times (I(L+1+K_s))}$ are obtained.

\textit{Step 2:} Given the sparsity of the factor matrices in the delay-angle-Doppler domains, we exploit the multiple signal classification (MUSIC) method~\cite{liu2024carrier} to extract the physical parameters carried by each column of $\left\{\mathbf{\bar{A}}_0,\mathbf{\bar{B}}_0,\mathbf{\bar{C}}_0\right\}$, denoted as
$\left\{\hat{\theta}_1^\mathrm{ini},\hat{\theta}_2^\mathrm{ini},\cdots,\hat{\theta}_{I(L+1+K_s)}^\mathrm{ini}\right\}$, $\left\{\hat{\tau}_1^\mathrm{ini},\hat{\tau}_2^\mathrm{ini},\cdots,\hat{\tau}_{I(L+1+K_s)}^\mathrm{ini}\right\}$, and 
$\left\{\hat{f}_{\mathrm{D},1}^\mathrm{ini},\hat{f}_{\mathrm{D},2}^\mathrm{ini},\cdots,\hat{f}_{\mathrm{D},{I(L+1+K_s)}}^\mathrm{ini}\right\}$. Then, referring to the models shown in \eqref{eq8} - \eqref{eq10}, the physical parameters are mapped to the noise-free matrix space, and $\left\{\mathbf{\tilde{A}}_0,\mathbf{\tilde{B}}_0,\mathbf{\tilde{C}}_0\right\}$ are obtained as the final initial factor matrices.

\textit{Step 3:} For the $\vartheta\left(\vartheta\ge1\right)$-th iteration, the analytical solutions of $\left\{\tilde{\mathbf{A}}^{(\vartheta)},\tilde{\mathbf{B}}^{(\vartheta)},\tilde{\mathbf{C}}^{(\vartheta)}\right\}$ obtained by the ALS~\cite{kolda2009tensor} are
\begin{equation}\label{add4}
{\fontsize{11}{11}\begin{aligned}
   &{{\tilde{\mathbf{A}}}^{(\vartheta)}}=\left[\mathcal{Y}\right]_{(1)}
   \left( {{{\tilde{\mathbf{C}}}}^{\left( \vartheta-1 \right)}}\odot {{{\tilde{\mathbf{B}}}}^{\left( \vartheta-1 \right)}} \right)\\ & \times \left[ \left( {{\left( {{{\tilde{\mathbf{C}}}}^{\left( \vartheta -1\right)}} \right)}^\mathrm{H}}{{{\tilde{\mathbf{C}}}}^{\left(\vartheta-1 \right)}} \right)*\left( {{\left( {{{\tilde{\mathbf{B}}}}^{\left( \vartheta-1 \right)}} \right)}^\mathrm{H}}{{{\tilde{\mathbf{B}}}}^{\left( \vartheta-1 \right)}} \right) \right]^{\dagger}, \\
   &{{\tilde{\mathbf{B}}}^{(\vartheta)}}=\left[\mathcal{Y}\right]_{(2)}
   \left( {{{\tilde{\mathbf{C}}}}^{\left( \vartheta-1 \right)}}\odot {{{\tilde{\mathbf{A}}}}^{\left( \vartheta \right)}} \right)\\ & \times \left[ \left( {{\left( {{{\tilde{\mathbf{C}}}}^{\left( \vartheta -1\right)}} \right)}^\mathrm{H}}{{{\tilde{\mathbf{C}}}}^{\left(\vartheta-1 \right)}} \right)*\left( {{\left( {{{\tilde{\mathbf{A}}}}^{\left( \vartheta \right)}} \right)}^\mathrm{H}}{{{\tilde{\mathbf{A}}}}^{\left( \vartheta \right)}} \right) \right]^{\dagger}, \\
   &{{\tilde{\mathbf{C}}}^{(\vartheta)}}=\left[\mathcal{Y}\right]_{(3)}
   \left( {{{\tilde{\mathbf{B}}}}^{\left( \vartheta \right)}}\odot {{{\tilde{\mathbf{A}}}}^{\left( \vartheta \right)}} \right)\\ & \times \left[ \left( {{\left( {{{\tilde{\mathbf{B}}}}^{\left( \vartheta \right)}} \right)}^\mathrm{H}}{{{\tilde{\mathbf{B}}}}^{\left(\vartheta \right)}} \right)*\left( {{\left( {{{\tilde{\mathbf{A}}}}^{\left( \vartheta \right)}} \right)}^\mathrm{H}}{{{\tilde{\mathbf{A}}}}^{\left( \vartheta \right)}} \right) \right]^{\dagger}.
\end{aligned}}
\end{equation}
Given the estimations $\left\{\tilde{\mathbf{A}}^{(\vartheta-1)},\tilde{\mathbf{B}}^{(\vartheta-1)},\tilde{\mathbf{C}}^{(\vartheta-1)}\right\}$ and $\left\{\tilde{\mathbf{A}}^{(\vartheta)},\tilde{\mathbf{B}}^{(\vartheta)},\tilde{\mathbf{C}}^{(\vartheta)}\right\}$, the updated factor matrices under the linear search strategy can be expressed as
\begin{equation}\label{add5}
\begin{aligned}
  \mathbf{\tilde{A}}^{(new)} =  \tilde{\mathbf{A}}^{(\vartheta)} &+ \alpha_{1,\mathrm{opt}}^{(\vartheta)}\Delta_1^{(\vartheta)}, \mathbf{\tilde{B}}^{(new)} =  \tilde{\mathbf{B}}^{(\vartheta)} + \alpha_{2,\mathrm{opt}}^{(\vartheta)}\Delta_2^{(\vartheta)}, \\ &  \mathbf{\tilde{C}}^{(new)} =  \tilde{\mathbf{C}}^{(\vartheta)} + \alpha_{3,\mathrm{opt}}^{(\vartheta)}\Delta_3^{(\vartheta)}.
\end{aligned}
\end{equation}
where $\left\{\alpha_{1,\mathrm{opt}}^{(\vartheta)},\alpha_{2,\mathrm{opt}}^{(\vartheta)},\alpha_{3,\mathrm{opt}}^{(\vartheta)}\right\}$ are the optimal relaxation factors; $\Delta_1^{\vartheta} = \left(\tilde{\mathbf{A}}^{(\vartheta)}-\tilde{\mathbf{A}}^{(\vartheta-1)}\right)$, $\Delta_2^{\vartheta} = \left(\tilde{\mathbf{B}}^{(\vartheta)}-\tilde{\mathbf{B}}^{(\vartheta-1)}\right)$, and $\Delta_3^{\vartheta} = \left(\tilde{\mathbf{C}}^{(\vartheta)}-\tilde{\mathbf{C}}^{(\vartheta-1)}\right)$ denote the updating direction. 

\textit{Step 4:} To obtain $\left\{\alpha_{1,\mathrm{opt}}^{(\vartheta)},\alpha_{2,\mathrm{opt}}^{(\vartheta)},\alpha_{3,\mathrm{opt}}^{(\vartheta)}\right\}$, we predict the initial factors $\left\{\hat{\alpha}_1^{(\vartheta)},\hat{\alpha}_2^{(\vartheta)},\hat{\alpha}_3^{(\vartheta)}\right\}$ by adaptive moment estimation and minimize the following objective function
\begin{equation}\label{add6}
f\left(\alpha_1\right) =\frac{1}{2}\left\|\mathcal{Y}-\left[\!\left[\mathbf{\tilde{A}}^{\vartheta}+\alpha_1\Delta_1^{(\vartheta)}, \mathbf{\tilde{B}}^{\vartheta},\mathbf{\tilde{C}}^{\vartheta}
 \right]\!\right]\right\|_{\mathrm{F}}^2
\end{equation}
\begin{equation}\label{add7}
  f\left(\alpha_2\right) =\frac{1}{2}\left\|\mathcal{Y}-\left[\!\left[\mathbf{\tilde{A}}^{new}, \mathbf{\tilde{B}}^{\vartheta}+\alpha_2\Delta_2^{(\vartheta)},\mathbf{\tilde{C}}^{\vartheta}
 \right]\!\right]\right\|_{\mathrm{F}}^2  
\end{equation}
\begin{equation}\label{add8}
f\left(\alpha_3\right) =\frac{1}{2}\left\|\mathcal{Y}-\left[\!\left[\mathbf{\tilde{A}}^{new}, \mathbf{\tilde{B}}^{new},\mathbf{\tilde{C}}^{\vartheta}+\alpha_3\Delta_3^{(\vartheta)}
 \right]\!\right]\right\|_{\mathrm{F}}^2    
\end{equation}

We take the prediction of $\hat{\alpha}_1^{(\vartheta)}$ as an example, and the $\left\{\hat{\alpha}_2^{(\vartheta)},\hat{\alpha}_3^{(\vartheta)}\right\}$ are similar. Taking the $(\vartheta-1)$-th optimal relaxation factor $\alpha_{1,\mathrm{opt}}^{(\vartheta-1)}$ as the starting point and obtaining the derivative of \eqref{add6}
\begin{equation}\label{add9}
\begin{aligned}
    g_1^{(\vartheta)} = & -\mathrm{real}\left(\left[\!\left[\Delta_1^{(\vartheta)}, \mathbf{\tilde{B}}^{\vartheta},\mathbf{\tilde{C}}^{\vartheta}
 \right]\!\right] \bullet  \right. \\ & \left. \left(\mathcal{Y}-\left[\!\left[\mathbf{\tilde{A}}^{\vartheta}+\alpha_{1,\mathrm{opt}}^{(\vartheta-1)}\Delta_1^{(\vartheta)}, \mathbf{\tilde{B}}^{\vartheta},\mathbf{\tilde{C}}^{\vartheta}
 \right]\!\right]\right)\right)    
\end{aligned}
\end{equation}
Then, we compute the first and second moments $m^{(\vartheta)} = \frac{\beta_{1}^\alpha m^{(\vartheta-1)} + (1-\beta_{1}^\alpha)g_1^{(\vartheta)}}{1-(\beta_{1}^\alpha)^{\vartheta}}$ and $v^{(\vartheta)} = \frac{\beta_{2}^\alpha v^{(\vartheta-1)} + (1-\beta_{2}^\alpha)\left(g_1^{(\vartheta)}\right)^2}{1-(\beta_{2}^\alpha)^{\vartheta}}$, where $m^{(0)}=0$, $v^{(0)}=0$, and $\left\{\beta_{1}^\alpha,\beta_{2}^\alpha\right\}$ are the hyperparameters. The $\hat{\alpha}_1^{(\vartheta)} = \alpha_{1,\mathrm{opt}}^{(\vartheta-1)} - \varrho_\alpha\frac{m^{(\vartheta)}}{\sqrt{v^{(\vartheta)}} + \varepsilon_\alpha}$ is obtained, where $\varepsilon_\alpha$ is a small constant and $\varrho_\alpha$ is learning rate.

\textit{Step 5:} Repeat \textit{Steps 3-4} until the $\left|\varsigma^{(\vartheta)}-\varsigma^{(\vartheta-1)} \right|/\varsigma^{(\vartheta-1)} \le 10^{-8}$, where $\varsigma^{(\vartheta)} = \left\|\mathcal{Y}-\left[\!\left[\mathbf{\tilde{A}}^{\vartheta}, \mathbf{\tilde{B}}^{\vartheta},\mathbf{\tilde{C}}^{\vartheta}\right]\!\right]\right\|_{\mathrm{F}}^2$.

With the TP-ALS algorithm, 
$\left\{\hat{\mathbf{A}},\hat{\mathbf{B}},\hat{\mathbf{C}} \right\}$ are obtained for assisting sensing. 
Although these factor matrices 
maintain the coupling of angle, delay, and Doppler information, 
the one-to-one relationship between MPCs, VAs, and targets no longer exists. 
Thus, a subsequent matching operation is needed to reestablish this relationship, 
as detailed in Section~\ref{sce3-B}. 
To avoid ambiguity, $\left\{\hat{\mathbf{A}},\hat{\mathbf{B}},\hat{\mathbf{C}} \right\}$ are rewritten as
\begin{equation}\label{eq17}
\begin{aligned}
&\hat{\mathbf{A}}=\left[\hat{\mathbf{a}}_\text{an}\left
(\theta_\xi\right)\right]|_{\xi=1,2,\cdots,I(L+1+K_s)},
\\&
\hat{\mathbf{B}}=\left[\hat{\mathbf{a}}_\text{de}\left
(\tau_\xi\right)\right]|_{\xi=1,2,\cdots,I(L+1+K_s)},
\\&
\hat{\mathbf{C}}=\left[\hat{\mathbf{f}}\left
(f_{\text{D},\xi}\right)\right]|_{\xi=1,2,\cdots,I(L+1+K_s)}.
\end{aligned}
\end{equation}

\subsection{Symbol-Level Fusion Stage}\label{sce3-B}


\begin{figure}[htbp]
    \centering   \includegraphics[width=.38\textwidth]{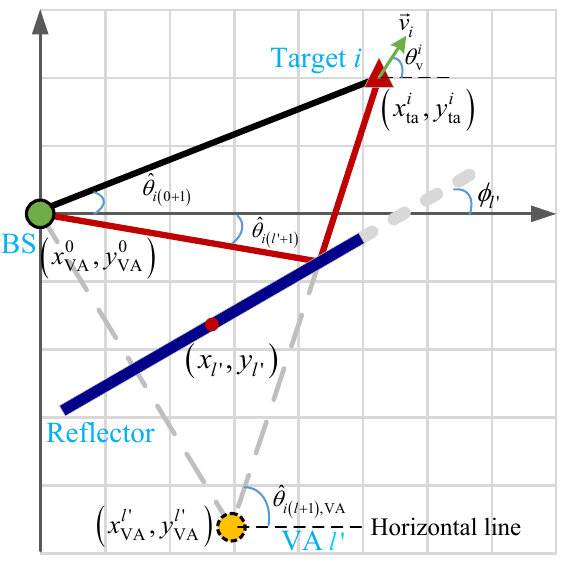}
    \caption{Geometric relationship between VAs and the BS}
    \label{fig4}
\end{figure}

As illustrated in Fig.~\ref{fig4}, 
we define the angle and center coordinate of the $l'\in\{1,2,\cdots, L\}$-th 
reflector as $\phi_{l'}\in\left[0,\pi\right]$ and $(x_{l'}, y_{l'})$, respectively. 
Therefore, the coordinate of the ${l'}$-th VA can be expressed as
\begin{equation}\label{eq18}
 \begin{aligned}
x_{\text{VA}}^{l'}&=\frac{-2\left[y_{l'}-\tan (\phi_{l'}) \cdot x_{l'}\right] \tan (\phi_{l'})}{1+\tan ^{2}(\phi_{l'})}, \\
    y_{\text{VA}}^{l'}&=\frac{2\left(y_{l'}-\tan (\phi_{l'}) \cdot x_{l'}\right)}{1+\tan ^{2}(\phi_{l'})},
\end{aligned}   
\end{equation}
and the coordinate of the BS is denoted by $\left(x_{\text{VA}}^{0},y_{\text{VA}}^{0}\right)$.

\subsubsection{Matching operation}
In Section~\ref{sec3-2}, the separated $I(L+1)$ MPCs are left unordered and 
unassociated with VAs and targets. Furthermore, clutter, such as temporary MPCs, must be suppressed to avoid ghosts, although \cite{shi2022device} indicates that ghosts are impossible if no three VAs are collinear and $K \geq 2I + 1$ holds. 
The matching operation consists of two steps: 
1) From $I(L+1+K_s)$ MPCs, $I(L+1)$ MPCs are selected to be associated with $L$ VAs and the BS, with each VA and the BS receiving an MPC cluster containing $I$ MPCs; 
2) The $I$ MPCs of each VA are associated with $I$ targets.
The detailed descriptions are as follows.

\textit{Step 1:}
Calculate the $L$ angles between $L$ reflectors and the BS, where the 
$l'$-th angle is denoted by $\eta_{l'}$, 
and assume that $\eta_{l'_1}\ne \eta_{l'_2}$, $\forall l'_1 \ne l'_2 \in\{1,2,\cdots,L\} $.
The MUSIC method~\cite{liu2024carrier} is applied to each column of $\hat{\mathbf{A}}$ 
to estimate the angles $\{\hat{\theta}_\xi\}_{\xi=1}^{I(L+1+K_s)}$ of $I(L+1+K_s)$ MPCs.
Then, for each VA, we traverse the angles $\{\hat{\theta}_\xi\}_{\xi=1}^{I(L+1+K_s)}$ 
using the index $\xi\in\{1,2,\cdots,I(L+1+K_s)\}$. 
Taking the $l'$-th VA as an example, 
when $\hat{\theta}_\xi\in\left[\eta_{l'}-\epsilon, \eta_{l'}+\epsilon\right]$, 
the $\xi$-th MPC is considered to match the $l'$-th VA, 
where $\epsilon$ is an angular threshold. 
Finally, the $I$ MPCs that match the $l'$-th VA are reorganized into an MPC cluster matrix, which is denoted by
\begin{equation}\label{eq19}
\resizebox{0.4\textwidth}{!}{$
\Phi_\text{VA}^{l'}=\left[\begin{array}{ccc}
\hat{\mathbf{a}}_\text{an}\left(\theta_{1(l'+1)}\right) & \hat{\mathbf{a}}_\text{de}\left(\tau_{1(l'+1)}\right) & \hat{\mathbf{f}}\left(f_{\text{D},1(l'+1)}\right) \\
\hat{\mathbf{a}}_\text{an}\left(\theta_{2(l'+1)}\right)  & \hat{\mathbf{a}}_\text{de}\left(\tau_{2(l'+1)}\right) & \hat{\mathbf{f}}\left(f_{\text{D},2(l'+1)}\right) \\ 
      \vdots & \ddots & \vdots \\
\hat{\mathbf{a}}_\text{an}\left(\theta_{i'(l'+1)}\right)  & \hat{\mathbf{a}}_\text{de}\left(\tau_{i'(l'+1)}\right) & \hat{\mathbf{f}}\left(f_{\text{D},i'(l'+1)}\right) \\
      \vdots & \ddots & \vdots \\
\hat{\mathbf{a}}_\text{an}\left(\theta_{I(l'+1)}\right)  & \hat{\mathbf{a}}_\text{de}\left(\tau_{I(l'+1)}\right) & \hat{\mathbf{f}}\left(f_{\text{D},I(l'+1)}\right) \\
    \end{array}\right],
$}
\end{equation}  
where $i'\in\{1,2,\cdots,I\}$.
For the BS, the $I$ most powerful MPCs are selected to associated with the BS, and the MPCs cluster matrix of BS is $\Phi_\text{VA}^0$. 
Given that the index $i'$ of MPC in (\ref{eq19}) does not correspond to the target's index $i$, we establish the one-to-one relationship 
between MPCs and targets for each VA and the BS in the \textit{Step 2}.

\textit{Step 2:}
Based on the exhaustive search algorithm in \hyperref[apex1]{\textbf{Appendix A}} and delay estimates $\{\hat{\tau}_{i(l+1)}\}_{i=1}^{I}$ 
obtained by performing the MUSIC method on the $2$-nd column of $\Phi_\text{VA}^{l}$ in the $\{\Phi_\text{VA}^{l}\}_{l=0}^L$, 
the one-to-one relationship between MPCs and targets in $L$ VA and BS is established, 
and the matched MPC cluster matrix of the $l'$-th VA is
\begin{equation}\label{eq20} 
\resizebox{0.4\textwidth}{!}{$
\tilde{\Phi}_\text{VA}^{l'}=\left[\begin{array}{ccc}
\hat{\mathbf{a}}_\text{an}\left(\theta_{1(l'+1)}\right) & \hat{\mathbf{a}}_\text{de}\left(\tau_{1(l'+1)}\right) & \hat{\mathbf{f}}\left(f_{\text{D},1(l'+1)}\right) \\
\hat{\mathbf{a}}_\text{an}\left(\theta_{2(l'+1)}\right)  & \hat{\mathbf{a}}_\text{de}\left(\tau_{2(l'+1)}\right) & \hat{\mathbf{f}}\left(f_{\text{D},2(l'+1)}\right) \\ 
      \vdots & \ddots & \vdots \\
\hat{\mathbf{a}}_\text{an}\left(\theta_{i(l'+1)}\right)  & \hat{\mathbf{a}}_\text{de}\left(\tau_{i(l'+1)}\right) & \hat{\mathbf{f}}\left(f_{\text{D},i(l'+1)}\right) \\
      \vdots & \ddots & \vdots \\
\hat{\mathbf{a}}_\text{an}\left(\theta_{I(l'+1)}\right)  & \hat{\mathbf{a}}_\text{de}\left(\tau_{I(l'+1)}\right) & \hat{\mathbf{f}}\left(f_{\text{D},I(l'+1)}\right) \\
    \end{array}\right].
$}
\end{equation}
Similar to (\ref{eq20}), the matched MPC cluster matrix of the BS is $\tilde{\Phi}_\text{VA}^{0}$. 
So far, the multi-target sensing problem is decomposed 
into $I$ individual single-target sensing problems. 

It is mentioned that the existing state-of-the-art scheme in \cite{gong2022multipath} 
proposed a single-target localization method of data-level fusion. 
To establish a benchmark for the absolute velocity estimation method proposed in this paper, 
we have developed an absolute velocity estimation method of data-level fusion in \hyperref[apex2]{\textbf{Appendix B}}, 
drawing inspiration from the methodology outlined in \cite{gong2022multipath}.
Section \ref{se5} verifies that the SL-MPS scheme is superior to 
the existing state-of-the-art schemes in both localization and absolute velocity estimation.

\subsubsection{Target localization method}
The target's location can be determined by the delay and angle information. 
Given the general scarcity of antenna resources, 
the angles information are utilized solely to narrow the searching scope.
The detailed localization method for the $i$-th target is outlined below.

\textit{Step 1}: 
Based on the $\{\hat{\theta}_{i(l+1),\text{VA}}\}_{l=0}^{L}$ obtained in \hyperref[apex2]{\textbf{Appendix B}}, 
a location searching scope with a length and width of $\Omega\Delta R$ is obtained. 
Then, the searching scope is gridded with a grid size of $\Delta R$, resulting in a searching matrix $\mathbf{P}\in\mathbb{C}^{\Omega\times\Omega}$, 
where $\left[\mathbf{P}\right]_{\rho ,\varsigma_\text{r}}=\left(x_{\rho ,\varsigma_\text{r}}^\text{g},y_{\rho ,\varsigma_\text{r}}^\text{g}\right)$ represents the candidate coordinate with $\rho ,\varsigma_\text{r} \in \{1,2,\cdots,\Omega\}$.

\textit{Step 2}: 
Based on $\mathbf{P}$, $L+1$ delay candidate vectors denoted by $\{\vec{\mathbf{D}}_l\}_{l=0}^{L}$ are obtained. 
The $l$-th delay candidate vector is $\vec{\mathbf{D}}_l\in\mathbb{C}^{1\times\Omega^2}$, 
where the $\xi_\text{r}$-th element of $\vec{\mathbf{D}}_l$ is 
\begin{equation}\label{eq21}
\scalebox{1}{$
\vec{\mathbf{D}}_l\left(\xi_\text{r}\right)=\frac{2\sqrt{\left(y_{\text{VA}}^{l}-\left[\text{vec}(\mathbf{P})\right]_{\xi_\text{r}}^\text{T}\left \langle 2 \right \rangle\right)^2+\left(x_{\text{VA}}^l-\left[\text{vec}(\mathbf{P})\right]_{\xi_\text{r}}^\text{T}\left \langle 1 \right \rangle\right)^2}}{c_0},
$}
\end{equation}
and $\xi_\text{r}\in\{1,2,\cdots,\Omega^2\}$.

\textit{Step 3}: 
Based on $\{\vec{\mathbf{D}}_l\}_{l=0}^{L}$ and (\ref{eq9}), 
the $L+1$ delay matching matrices denoted by $\{\mathbf{D}_l\}_{l=0}^{L}$ are obtained, 
where the $l$-th delay matching matrix $\mathbf{D}_l\in\mathbb{C}^{N_\text{c}\times \Omega^2}$ is expressed as
\begin{equation}\label{eq22}
\scalebox{0.75}{$
  \mathbf{D}_l= \begin{bmatrix}
 1 & 1 & \cdots & 1 \\
 e^{-j2\pi\Delta f\left[\vec{\mathbf{D}}_l\right]_1} & e^{-j2\pi\Delta f\left[\vec{\mathbf{D}}_l\right]_2} & \cdots & e^{-j2\pi\Delta f\left[\vec{\mathbf{D}}_l\right]_{\Omega^2}}\\
 \vdots & \vdots & \ddots & \vdots \\
 e^{-j2\pi\Delta f\left(N_\text{c}-1\right)\left[\vec{\mathbf{D}}_l\right]_1} & e^{-j2\pi\Delta f\left(N_\text{c}-1\right)\left[\vec{\mathbf{D}}_l\right]_2} & \cdots & e^{-j2\pi\Delta f\left(N_\text{c}-1\right)\left[\vec{\mathbf{D}}_l\right]_{\Omega^2}}
\end{bmatrix}.$}
\end{equation}

\textit{Step 4}: 
$\{\mathbf{D}_l\}_{l=0}^{L}$ and corresponding $L+1$ delay vectors 
$\{\hat{\mathbf{a}}_\text{de}\left(\tau_{i(l+1)}\right)\}_{l=0}^{L}$ are fused to obtain a delay fusion vector 
$\mathbf{J}\in\mathbb{C}^{1\times\Omega^2}$, which is expressed as
\begin{equation}\label{weigth_location}
\mathbf{J}=\sum_{l=0}^L\left(\frac{n_{\text{t},i}^l}{\sum_{l=0}^Ln_{\text{t},i}^l}|\mathbf{J}_l|\right),  
\end{equation}
where $\mathbf{J}_l$ is the $l$-th delay profile, expressed as
\begin{equation}\label{eq23}
\mathbf{J}_l=\left[\hat{\mathbf{a}}_\text{de}\left(\tau_{i(l+1)}\right)\right]^\text{T}\mathbf{D}_l^*,
\end{equation}
and $n_{\text{t},i}^l$ denotes the noise variance of $\hat{\mathbf{a}}_\text{de}\left(\tau_{i(l+1)}\right)$, 
which can be obtained by the MLE~\cite{morelli2012joint}.

\textit{Step 5}: 
The peak value of $\mathbf{J}$ is searched to obtain the peak index value $\hat{\xi}_\text{r}$. 
The value of $\left[\text{vec}(\mathbf{P})\right]_{\hat{\xi}_\text{r}}^\text{T}$ represents the 
$i$-th estimated target's location, denoted by $\left(\tilde{x}_\text{ta}^i,\tilde{y}_\text{ta}^i\right)$.

\subsubsection{Target absolute velocity estimation method}
The target's absolute velocity is associated with the angle and Doppler information. 
The absolute velocity estimation method for the $i$-th target is as follows.

\textit{Step 1}: 
Obtain a velocity searching scope with an angle interval of $\left[0, 2\pi\right]$ and magnitude interval of $\left[V_\text{min}, V_\text{max}\right]$, 
based on the sensing demands in the application scenario of ISAC~\cite{liu2024carrier}.

\textit{Step 2}: 
The velocity searching scope is gridded with a grid size $\Delta\theta$ for the angle and a grid size $\Delta V$ for the magnitude, 
yielding a velocity searching matrix $\mathbf{V}\in\mathbb{C}^{A\times B}$, 
where $A=\left \lfloor \frac{2\pi}{\Delta \theta} \right \rfloor $ and $B=\left \lfloor \frac{V_\text{max}-V_\text{min}}{\Delta \theta} \right \rfloor$; 
$\left[\mathbf{V}\right]_{a,b}=(\theta_{a,b}^\text{g},v_{a,b}^\text{g})$ with $a\in\{1,2,\cdots,A\}$ and $b\in\{1,2,\cdots,B\}$. 

\textit{Step 3}: 
Based on $\mathbf{V}$, $L+1$ Doppler candidate vectors denoted by $\{\vec{\mathbf{S}}_l\}_{l=0}^{L}$ are obtained. 
The $l$-th Doppler candidate vector is $\vec{\mathbf{S}}_l\in\mathbb{C}^{1\times(A\times B)}$, 
where the $\xi_\text{v}$-th element of $\vec{\mathbf{S}}_l$ is expressed in (\ref{eq24}) shown at the bottom of this page, 
and $\xi_\text{v}\in\{1,2,\cdots,(A\times B)\}$.
\begin{figure*}[b]
  {\noindent} \rule[-10pt]{18cm}{0.1em}
\begin{equation}\label{eq24}
\vec{\mathbf{S}}_l\left(\xi_\text{v}\right)=\frac{-2f_\text{c}\left[\text{vec}\left(\mathbf{V}\right)\right]_{\xi_\text{v}}^\text{T}\left \langle 2 \right \rangle }{c_0}\left[\cos\left(\hat{\theta}_{i(l+1),\text{VA}}-\left[\text{vec}\left(\mathbf{V}\right)\right]_{\xi_\text{v}}^\text{T}\left \langle 1 \right \rangle \right)\right],
\end{equation}    
\end{figure*}

\textit{Step 4}: 
Based on $\{\vec{\mathbf{S}}_l\}_{l=0}^{L}$ and (\ref{eq10}), 
the $L+1$ Doppler matching matrices denoted by $\{\mathbf{S}_l\}_{l=0}^{L}$ are obtained, 
where the $l$-th Doppler matching matrix $\mathbf{S}_l\in\mathbb{C}^{M\times(A\times B)}$ is expressed as
\begin{equation}\label{eq25}
\scalebox{0.75}{$
  \mathbf{S}_l= \begin{bmatrix}
 1 & 1 & \cdots & 1 \\
 e^{j2\pi T\left[\vec{\mathbf{S}}_l\right]_1} & e^{j2\pi T\left[\vec{\mathbf{S}}_l\right]_2} & \cdots & e^{j2\pi T\left[\vec{\mathbf{S}}_l\right]_{A\times B}}\\
 \vdots & \vdots & \ddots & \vdots \\
 e^{j2\pi(M-1) T\left[\vec{\mathbf{S}}_l\right]_1} & e^{j2\pi (M-1)T\left[\vec{\mathbf{S}}_l\right]_2} & \cdots & e^{j2\pi (M-1)T\left[\vec{\mathbf{S}}_l\right]_{A\times B}}\\
\end{bmatrix}.$}
\end{equation}

\textit{Step 5}: 
$\{\mathbf{S}_l\}_{l=0}^{L}$ and corresponding $L+1$ Doppler vectors $\{\hat{\mathbf{f}}\left(f_{\text{D},i(l+1)}\right)\}_{l=0}^{L}$ 
are fused to obtain a Doppler fusion vector $\mathbf{Q}\in\mathbb{C}^{1 \times (A\times B)}$, 
which is expressed as
\begin{equation}\label{weight_velocity}
 \mathbf{Q}= \sum_{l=0}^L\left(\frac{n_{\text{t},i}^l}{\sum_{l=0}^Ln_{\text{t},i}^l}|\mathbf{Q}_l|\right),  
\end{equation}
where $\mathbf{Q}_l$ is the $l$-th Doppler profile, expressed as follows.
\begin{equation}\label{eq26}
   \mathbf{Q}_l= \left[\hat{\mathbf{f}}\left(f_{\text{D},i(l+1)}\right)\right]^\text{T}\mathbf{S}_l^*.
\end{equation}

\textit{Step 6}: 
The peak value of $\mathbf{Q}$ is searched to obtain the peak index value $\tilde{\xi}_\text{v}$. 
The value of $\left[\text{vec}\left(\mathbf{V}\right)\right]_{\tilde{\xi}_\text{v}}^{\text{T}}$ 
represents the $i$-th estimated target's absolute velocity, 
denoted by $(\tilde{\theta}_\text{v}^{i},|\tilde{\Vec{v}}_i|)$.

\hyperref[algo1]{\textbf{Algorithm 1}} demonstrates the proposed SL-MPS scheme.
\begin{table}[!ht]
\centering
\label{algo1}
\resizebox{0.91\linewidth}{!}{
\setlength{\arrayrulewidth}{1.5pt}
\begin{tabular}{rllll}
\hline
\multicolumn{5}{l}{\textbf{Algorithm 1:} The Proposed SL-MPS Scheme}   \\ \hline
\multirow{-6}{*}{\textbf{Input:} }               & \multicolumn{4}{l}{\begin{tabular}[c]{@{}l@{}}The received three-order tensor $\mathcal{Y}$ in (\ref{eq7});\\ The angles and center coordinates of reflectors\\ $\{\left(x_{l'},y_{l'}\right)\}_{l'=1}^{L}$ and $\{\phi_{l'}\}_{l'=1}^{L}$;\\ The coordinate of the BS $\left(x_\text{VA}^0,y_\text{VA}^0\right)$; \\ The grid sizes of angle and magnitude $\Delta\theta$ and $\Delta V$; \\ The grid size $\Delta R$. \end{tabular}} \\
\multirow{-2}{*}{\textbf{Output:} }               & \multicolumn{4}{l}{\begin{tabular}[c]{@{}l@{}}The estimated $I$ target's locations $\{\left(\tilde{x}_\text{ta}^i,\tilde{y}_\text{ta}^i\right)\}_{i=1}^{I}$;\\
The estimated $I$ target's absolute velocities $\{(\tilde{\theta}_\text{v}^{i},|\tilde{\Vec{v}}_i|)\}_{i=1}^{I}$.\end{tabular}} \\ 
\multicolumn{5}{l}{\textbf{MPC Separation Stage:}} \\
\multirow{-2}{*}{1:}       & \multicolumn{4}{l}{\begin{tabular}[c]{@{}l@{}}Obtain the factor matrices $\left\{\hat{\mathbf{A}},\hat{\mathbf{B}},\hat{\mathbf{C}}\right\}$ \\ by the proposed TP-ALS algorithm;\end{tabular}} \\
2:        & \multicolumn{4}{l}{Output $\hat{\mathbf{A}}$, $\hat{\mathbf{B}}$, and $\hat{\mathbf{C}}$ in (\ref{eq17}).}  \\
\multicolumn{5}{l}{\textbf{Symbol-Level Fusion Stage:}} \\
\multirow{-2}{*}{3:}     & \multicolumn{4}{l}{\begin{tabular}[c]{@{}l@{}}Obtain the coordinates of VAs $\{(x_{\text{VA}}^{l'},y_{\text{VA}}^{l'})\}|_{l'=1}^{L}$ by\\ $\{\left(x_{l'},y_{l'}\right)\}_{l'=1}^{L}$, $\{\phi_{l'}\}_{l'=1}^{L}$, and (\ref{eq18});\end{tabular}}  \\
\multirow{-2}{*}{4:}     & \multicolumn{4}{l}{\begin{tabular}[c]{@{}l@{}}Obtain the the matched MPCs cluster matrices $\{\tilde{\Phi}_\text{VA}^l\}_{l=0}^{L}$ \\ by the proposed matching operation;\end{tabular}}  \\
5:  & \multicolumn{4}{l}{\textbf{For} $i$ to $I$ \textbf{do}}  \\
6:                              & \multicolumn{4}{l}{$\hspace{0.8em}$ Obtain a searching matrix $\mathbf{P
}$ by $\Delta R$;}  \\
7:                              & \multicolumn{4}{l}{$\hspace{0.8em}$ Obtain $\{\vec{\mathbf{G}}_l\}_{l=0}^{L}$ by $\mathbf{P
}$ and (\ref{eq21});}  \\
8:                           & \multicolumn{4}{l}{$\hspace{0.8em}$ Obtain $\{\mathbf{D}_l\}_{l=0}^{L}$ by $\{\vec{\mathbf{G}}_l\}_{l=0}^{L}$ and (\ref{eq22}); }   \\
9:                              & \multicolumn{4}{l}{$\hspace{0.8em}$ Obtain a $\mathbf{J}$ by $\{\hat{\mathbf{a}}_\text{de}\left(\tau_{i(l+1)}\right)\}_{l=0}^{L}$, (\ref{weigth_location}), and (\ref{eq23}) ;}  \\
10:                            & \multicolumn{4}{l}{$\hspace{0.8em}$ Obtain $\hat{\xi}_\text{r}$ from $\mathbf{J}$ to find the $i$-th target's location $\left(\tilde{x}_\text{ta}^i,\tilde{y}_\text{ta}^i\right)$;}  \\
11:                              & \multicolumn{4}{l}{$\hspace{0.8em}$ Obtain a searching matrix $\mathbf{V
}$ by $\Delta \theta$ and $\Delta V$ ;}  \\
12:                              & \multicolumn{4}{l}{$\hspace{0.8em}$ Obtain $\{\vec{\mathbf{S}}_l\}_{l=0}^{L}$ by  $\mathbf{V}$ and (\ref{eq24});}  \\
13:                           & \multicolumn{4}{l}{$\hspace{0.8em}$ Obtain $\{\mathbf{S}_l\}_{l=0}^{L}$ by $\{\vec{\mathbf{S}}_l\}_{l=0}^{L}$ and (\ref{eq25}); }   \\
14:                              & \multicolumn{4}{l}{$\hspace{0.8em}$ Obtain a $\mathbf{Q}$ by $\{\mathbf{S}_l\}_{l=0}^{L}$, $\{\hat{\mathbf{f}}\left(f_{\text{D},i(l+1)}\right)\}_{l=0}^{L}$, (\ref{weight_velocity}), and (\ref{eq26}) ;}  \\
\multirow{-2}{*}{15:}     & \multicolumn{4}{l}{\begin{tabular}[c]{@{}l@{}}$\hspace{0.8em}$ Obtain $\hat{\xi}_\text{v}$ from $\mathbf{Q}$ to find the $i$-th target's \\ $\hspace{0.8em}$ absolute velocity $(\tilde{\theta}_\text{v}^{i},|\tilde{\Vec{v}}_i|)$;\end{tabular}}  \\
16:  & \multicolumn{4}{l}{\textbf{End} \textbf{For}}
\\
\hline
\end{tabular}}
\end{table}

\section{Performance Analysis}\label{se4}
In this section, 
the CRLBs of localization and absolute velocity estimation with MPC information are derived.

For an estimate vector parameter ${\beta}=\left[\beta_1,\beta_2,\cdots,\beta_w\right]$, 
the CRLB can be expressed as~\cite{godrich2010target}
\begin{equation}\label{eq27}
    \text{CRLB}_\beta=\left[\mathbf{J}(\beta)\right]^{-1},
\end{equation}
where $\mathbf{J}(\cdot) \in \mathbb{R}^{w\times w}$ is a Fisher information matrix (FIM). The $(c,j)$-th element of $\mathbf{J}(\beta)$ is~\cite{godrich2010target}
\begin{equation}\label{eq28}
    \left[\mathbf{J}(\beta)\right]_{c,j}=-E\left[\frac{\partial^2\ln{p(r;\beta)}}{\partial \beta_c\partial \beta_j}\right],\quad c,j\in\{1,2,\cdots,w\}
\end{equation}
where $r$ is the received signal and $p(\cdot)$ denotes the probability density function.

For an estimate $\varphi =g(\beta)\in \mathbb{R}^{q\times 1}$, 
the CRLB can be expressed as~\cite{godrich2010target}
\begin{equation}\label{eq29}
    \text{CRLB}_\varphi=\left[\frac{\partial \beta}{\partial \varphi}\mathbf{J}(\beta)\left[\frac{\partial \beta}{\partial \varphi}\right]^\text{T}\right]^{-1},
\end{equation}
where $\frac{\partial\beta}{\partial \varphi}\in \mathbb{R}^{q\times w}$ is the Jacobian matrix.

\subsection{CRLB of Localization}
For the $i$-th target, 
we assume that the estimate vector parameter is $\mathbf{p}=\left[x_\text{ta}^i,y_\text{ta}^i\right]^\text{T}$, 
and the CRLB of localization is revealed in \hyperref[theo1]{Theorem 1}.

\begin{theorem}\label{theo1}
\textit{ Based on (\ref{eq29}), the CRLB of localization is expressed as
 \begin{equation}\label{eq31}
     \text{CRLB}_{\mathbf{p}}=\left[\mathbf{P}\mathbf{F}(\mathbf{c})\mathbf{P}^\text{T}\right]^{-1},
 \end{equation}
 where $\mathbf{P}\in\mathbb{R}^{2\times 3L}$ is a Jacobian matrix expressed in (\ref{eq47}), 
 and $\mathbf{F}(\mathbf{c})\in \mathbb{R}^{3L \times 3L}$ is an FIM expressed in (\ref{eq38}).
Upon observing (\ref{eq31}), 
the CRLB of localization changes with the locations and absolute velocities of targets, 
and the locations of VAs. }
\end{theorem}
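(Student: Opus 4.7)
The plan is to apply the change-of-variable CRLB formula (\ref{eq29}) with an intermediate parameter vector carrying the per-path physical observables, and then compute its two ingredients—the Jacobian $\mathbf{P}$ and the intrinsic FIM $\mathbf{F}(\mathbf{c})$—from the tensor signal model and the VA geometry respectively.

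First, I would introduce $\mathbf{c}\in\mathbb{R}^{3L}$ by stacking the triplet $(\tau_i^l,\theta_i^l,f_{\text{D},i}^l)$ across the $L$ paths that enter (\ref{eq7}) through the Kruskal factors $\mathbf{a}_\text{de}$, $\mathbf{a}_\text{an}$, and $\mathbf{f}$. These are the quantities through which the received tensor $\mathcal{Y}$ depends on $(x_\text{ta}^i,y_\text{ta}^i)$ at all. Substituting $\beta=\mathbf{c}$ and $\varphi=\mathbf{p}$ in (\ref{eq29}) immediately yields $\mathrm{CRLB}_{\mathbf{p}}=[\mathbf{P}\mathbf{F}(\mathbf{c})\mathbf{P}^\text{T}]^{-1}$, so the remaining task is to produce $\mathbf{P}=\partial\mathbf{c}/\partial\mathbf{p}$ and $\mathbf{F}(\mathbf{c})$ explicitly.

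For $\mathbf{P}$, I would use the VA geometry in (\ref{eq18}) together with the range expression in (\ref{eq21}): $\tau_i^l=2R_i^l/c_0$ with $R_i^l=\sqrt{(x_\text{ta}^i-x_\text{VA}^l)^2+(y_\text{ta}^i-y_\text{VA}^l)^2}$, and $\theta_i^l=\arctan\bigl((y_\text{ta}^i-y_\text{VA}^l)/(x_\text{ta}^i-x_\text{VA}^l)\bigr)$. Differentiating produces closed-form entries in the per-path range and bearing, while $f_{\text{D},i}^l$ depends on $\mathbf{p}$ only through $\theta_i^l$ via the line-of-sight projection $\cos(\theta_i^l-\theta_v^i)$. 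Stacking these partial derivatives column-wise across the $L$ paths yields the $2\times 3L$ Jacobian in (\ref{eq47}). For $\mathbf{F}(\mathbf{c})$ I would start from the AWGN likelihood induced by (\ref{eq7}) and apply (\ref{eq28}). Because each column of $\mathbf{A}$, $\mathbf{B}$, $\mathbf{C}$ depends on exactly one component of $\mathbf{c}$ through (\ref{eq8})--(\ref{eq10}), the Hessian entries reduce to inner products of the mode-wise derivatives $\partial\mathbf{a}_\text{an}/\partial\theta$, $\partial\mathbf{a}_\text{de}/\partial\tau$, $\partial\mathbf{f}/\partial f_\text{D}$ with the unperturbed factor vectors of the other two modes. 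These are closed-form array-aperture, bandwidth, and CPI traces and generate the block pattern in (\ref{eq38}). Substituting back into (\ref{eq29}) gives (\ref{eq31}), and the claimed dependence on target location and absolute velocity is visible because $\mathbf{P}$ inherits them through the range/bearing partials and the Doppler cosine term.

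The main obstacle will be handling the cross-path coupling in $\mathbf{F}(\mathbf{c})$: because all $L$ MPCs contribute coherently to the same received tensor, the off-block entries, for instance the Fisher information between $\tau_i^{l_1}$ and $\theta_i^{l_2}$ with $l_1\neq l_2$, are not zero and must be evaluated using the mode-wise inner products of perturbed and unperturbed Kruskal columns. A secondary subtlety lies in the Doppler row of $\mathbf{P}$, since $f_{\text{D},i}^l$ depends on position only implicitly through the bearing angle $\theta_i^l$, so the chain rule has to be applied consistently across all $L$ paths before (\ref{eq29}) can be invoked cleanly.
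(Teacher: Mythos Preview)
Your overall architecture---introduce an intermediate vector $\mathbf{c}$ of per-path observables, compute its FIM, build the Jacobian from the VA geometry, and plug into (\ref{eq29})---matches the paper. Two concrete departures are worth flagging.

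First, the signal model you choose for $\mathbf{F}(\mathbf{c})$ is not the one the paper uses. You propose to derive the FIM from the coherent tensor model (\ref{eq7}), in which all $L$ paths superpose in a single observation; that is why you anticipate cross-path coupling as the ``main obstacle.'' The paper instead adopts a \emph{post-separation} model: each VA is treated as emitting its own independent echo $y_i^{l'}$ (their (\ref{eq32})), and the log-likelihood (\ref{eq34}) is a sum of $L$ independent terms. Under this assumption every off-diagonal Fisher entry between parameters of different paths vanishes, so the six $L\times L$ blocks $\mathbf{D},\mathbf{L},\mathbf{S},\mathbf{A},\mathbf{E},\mathbf{R}$ in (\ref{eq38}) are diagonal by construction and come out as closed-form polynomials in $N_\text{R},N_\text{c},M$ (their (\ref{eq39})--(\ref{eq44})). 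Your route would yield a valid---arguably tighter---bound, but it would not reproduce the diagonal-block FIM claimed in (\ref{eq38}), and the cross-path inner products you mention would have to be carried all the way through.

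Second, the paper parameterizes the angle coordinate by $\sin(\theta_{i,\text{VA}}^{l'})$ rather than $\theta_i^{l}$ itself (see their $\mathbf{g}$ in (\ref{eq33})). Because the receive steering vector phase in (\ref{eq5}) is linear in $\sin(\cdot)$, this makes the second derivatives in (\ref{eq40}), (\ref{eq42}), (\ref{eq43}) clean quadratic sums. Your choice of $\theta$ as the intermediate variable is fine in principle but introduces an extra $\cos\theta$ factor in both the FIM and the Jacobian; the final CRLB is the same, but the intermediate expressions will not match (\ref{eq38}) or (\ref{eq47}) entry-for-entry.
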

\begin{proof}
For a MIMO-OFDM ISAC system, 
we assume that there exist $L$ MPCs about the $i$-th target. 
Each MPC is assumed to be a signal emitted by a VA and the received ISAC echo signal of the $l'$-th VA is
\begin{equation}\label{eq32}
\begin{aligned}
y_i^{l'}=&\sum_{k=1}^{N_\text{R}}\sum_{n=1}^{N_\text{c}}\sum_{m=1}^{M}\left(\alpha_i^{l'} e^{j2\pi mf_{\text{D},i}^{l'}T}e^{-j2\pi n\Delta f\tau_i^{l'}} \right. \\ & \left. \times e^{j2\pi\frac{\text{d}_r}{\lambda} k\sin(\theta_{i,\text{VA}}^{l'})}+z_i^{l'}\right),   
\end{aligned}
\end{equation}
where $z_i^{l'} \sim \mathcal{CN}(0,\sigma_z^2)$;
$\theta_{i,\text{VA}}^{l'}$ denotes the AoA between the $l'$-th VA and the $i$-th target.

Observing Eqs. (\ref{eq29}) and (\ref{eq32}), 
the target location is related to the parameters in the received echo signal. 
Thus, we first derive the FIM of the parameters, 
and then obtain the CRLB of localization.

We define an unknown parameter vector $\mathbf{c}=\left[\mathbf{e}, \mathbf{g}, \mathbf{h}\right]$, where
\begin{equation}\label{eq33}
  \begin{aligned} 
\mathbf{e} & = \left[\tau_i^1,\tau_i^2,\cdots\tau_i^L\right], \\
 \mathbf{g} & = \left[\sin\left(\theta_{i,\text{VA}}^1\right),\sin\left(\theta_{i,\text{VA}}^2\right),\cdots,\sin\left(\theta_{i,\text{VA}}^L\right)\right], \\
\mathbf{h} & = \left[f_{\text{D},i}^1,f_{\text{D},i}^2,\cdots,f_{\text{D},i}^L\right]. 
\end{aligned}  
\end{equation}
According to (\ref{eq32}), the log-likelihood function of $\mathbf{c}$ is 
\begin{equation}\label{eq34}
\ln p\left(\mathbf{y};\mathbf{c}\right)\propto-\frac{1}{2\sigma_z^2} \sum\limits_{l'}\left|y_i^{l'}-\sum\limits_k \sum\limits_n \sum\limits_ms_{l',k,n,m}^i\right|^2.   
\end{equation}
where $\mathbf{y}=[y_i^1,y_i^2,\cdots,y_i^L]^\text{T}$ denotes the vector form of $L$ received echo signals, and 
\begin{equation}\label{eq35}
    s_{l',k,n,m}^i=\alpha_i^{l'}e^{j2\pi mf_{\text{D},i}^{l'}T}e^{-j2\pi n\Delta f\tau_i^{l'}}e^{j2\pi\frac{\text{d}_r}{\lambda} k\sin(\theta_{i,\text{VA}}^{l'})}.
\end{equation}

According to (\ref{eq28}) and (\ref{eq34}), 
the matrix $\mathbf{F}(\mathbf{c})$ can be represented as the form of block matrix, 
denoted by
\begin{equation}\label{eq38}
\mathbf{F}(\mathbf{c})=\begin{bmatrix}\mathbf{D}_{L\times L} &\mathbf{L}_{L\times L} & \mathbf{S}_{L\times L} \\ \mathbf{L}_{L\times L}&\mathbf{A}_{L\times L}&\mathbf{E}_{L\times L} \\ \mathbf{S}_{L\times L}&\mathbf{E}_{L\times L}&\mathbf{R}_{L\times L}\end{bmatrix},
\end{equation}
where $\mathbf{D}$, $\mathbf{L}$, $\mathbf{S}$, $\mathbf{A}$, $\mathbf{E}$ and $\mathbf{R}$ are all the diagonal matrices. 
The $(l',l')$-th elements of $\mathbf{D}$, $\mathbf{L}$, $\mathbf{S}$, $\mathbf{A}$, $\mathbf{E}$ and $\mathbf{R}$ 
can be expressed in (\ref{eq39}), (\ref{eq40}), (\ref{eq41}), (\ref{eq42}), (\ref{eq43}), and (\ref{eq44}), respectively.
\begin{figure*}
\centering
\begin{minipage}{0.48\textwidth}
\begin{equation}\label{eq39}
{\fontsize{9}{9}\begin{aligned}
 \left[\mathbf{D}\right]_{l',l'} &=-E\left[\frac{\partial^2\ln p(\mathbf{y};\mathbf{c})}{\partial^2\tau_i^{l'}}\right]\\ &=\frac{4\pi^2\left|\alpha_i^{l'}\right|^2\Delta f^2(2N_\text{c}+1)(N_\text{c}+1)N_\text{c}N_\text{R}M}{6\sigma^2},   
\end{aligned}}   
\end{equation}
\end{minipage}
\hfill
\begin{minipage}{0.48\textwidth}
 \begin{equation}\label{eq40}
{\fontsize{9}{9} \begin{aligned}
 \left[\mathbf{L}\right]_{l',l'}&=-E\left[\frac{\partial^2\ln p\left(\mathbf{y};\mathbf{c}\right)}{\partial\tau_i^{l'}\partial\sin\left(\theta_{i,\text{VA}}^{l'}\right)}\right]\\&=\frac{-4\pi^2\left|\alpha_i^{l'}\right|^2\frac{\text{d}_r}{\lambda}\Delta f(N_\text{R}+1)N_\text{R}(N_\text{c}+1)N_\text{c}M}{4\sigma^2},     
 \end{aligned}}
\end{equation} 
\end{minipage}
\end{figure*}
\begin{figure*}
    \begin{minipage}{0.48\textwidth}
      \begin{equation}\label{eq41}
 {\fontsize{9}{9}\begin{aligned}
 \left[\mathbf{S}\right]_{l',l'}&=-E\left[\frac{\partial^2\ln p\left(\mathbf{y};\mathbf{c}\right)}{\partial\tau_i^{l'}\partial f_{\text{D}, i}^{l'}}\right]\\&=\frac{-4\pi^2\left|\alpha_i^{l'}\right|^2T\Delta f(M+1)M(N_\text{c}+1)N_\text{c}N_\text{R}}{4\sigma^2},     
 \end{aligned}}
\end{equation}  
    \end{minipage}
    \hfill
    \begin{minipage}{0.48\textwidth}
       \begin{equation}\label{eq42}
      {\fontsize{9}{9} \begin{aligned}
          \left[\mathbf{A}\right]_{{l'},{l'}}&=-E\left[\frac{\partial^2\ln p\left(\mathbf{y};\mathbf{c}\right)}{\partial^2\sin\left(\theta_{i,\text{VA}}^{l'}\right)}\right]\\&=\frac{4\pi^2\left|\alpha_i^{l'}\right|^2\left(\frac{\text{d}_r}{\lambda}\right)^2(2N_\text{R}+1)(N_\text{R}+1)N_\text{R}N_\text{c}M}{6\sigma^2}. 
       \end{aligned}}
\end{equation} 
    \end{minipage}
\end{figure*}
\begin{figure*}
    \begin{minipage}{0.48\textwidth}
      \begin{equation}\label{eq43}
      \begin{aligned}
          \left[\mathbf{E}\right]_{{l'},{l'}}&=-E\left[\frac{\partial^2\ln p\left(\mathbf{y};\mathbf{v}\right)}{\partial\sin\left(\theta_{i,\text{VA}}^{l'}\right)\partial f_{\text{D},i}^{l'}}\right]\\& =\frac{4\pi^2\left|\alpha_i^{l'}\right|^2\frac{\text{d}_r}{\lambda}T(N_\text{R}+1)N_\text{R}(M+1)MN_\text{c}}{4\sigma^2}, 
      \end{aligned}
    \end{equation}  
    \end{minipage}
    \hfill
    \begin{minipage}{0.48\textwidth}
      \begin{equation}\label{eq44}
      \begin{aligned}
        \left[\mathbf{R}\right]_{{l'},{l'}}&=-E\left[\frac{\partial^2\ln p\left(\mathbf{y};\mathbf{v}\right)}{\partial^2f_{\text{D},i}^{l'}}\right]\\&=\frac{4\pi^2\left|\alpha_i^{l'}\right|^2T^2(2M+1)(M+1)MN_\text{R}N_\text{c}}{6\sigma^2}.  
      \end{aligned}     
    \end{equation}  
    \end{minipage}
 {\noindent} \rule[-10pt]{18cm}{0.1em}
\end{figure*}

With the derived FIM of $\mathbf{c}$, 
the CRLB of localization can be derived as follows. 
For the estimate vector $\mathbf{p}$ and $\mathbf{c}$, 
we have the following relationship.
\begin{equation}\label{eq46}
\begin{cases}\tau_i^{l'}=\frac{2\sqrt{\left(x_{\text{VA}}^{l'}-x_{\text{ta}}^i\right)^2+\left(y_{\text{VA}}^{l'}-y_{\text{ta}}^i\right)^2}}{c_0}\\ y_{\text{ta}}^i-y_{\text{VA}}^{l'}=\tan\left(\theta_{i,\text{VA}}^{l'}\right)\left(x_{\text{ta}}^i-x_{\text{VA}}^{l'}\right) \\
f_{\text{D},i}^{l'}=\frac{-2f_{\text{c}}\left|\vec{v}_i\right|}{c_0}\cos\left(\theta_{i,\text{VA}}^{l'}-\theta_{\text{v}}^i\right)
\end{cases}.
\end{equation}

According to (\ref{eq29}), (\ref{eq38}), and (\ref{eq46}), 
the CRLB of localization can be expressed in (\ref{eq31}), 
and the $\mathbf{P}$ is expressed in (\ref{eq47}) shown at the bottom of the next page, 
where $\eta_i^{l'}=\sqrt{\left(x_{\text{ta}}^i-x_{\text{VA}}^{l'}\right)^2+\left(y_{\text{ta}}^i-y_{\text{VA}}^{l'}\right)^2}$, $\epsilon_i^{l'}=\left(y_{\text{ta}}^i-y_{\text{VA}}^{l'}\right)\left(x_{\text{ta}}^i-x_{\text{VA}}^{l'}\right)$, $\Xi_i^{l'}=f_{\text{D},i}^{l'}
\tan\left(\gamma_i^{l'}-\theta_\text{v}^i\right)$, and $\gamma_i^{l'}=\arctan\left(\frac{y_{\text{ta}}^i-y_{\text{VA}}^{l'}}{x_{\text{ta}}^i-x_{\text{VA}}^{l'}}\right)$.

\renewcommand{\arraystretch}{1.8} 
\begin{figure*}[b]
   {\noindent} \rule[-10pt]{18cm}{0.1em}  \begin{equation}\label{eq47}
    \begin{aligned}
         \mathbf{P}
         =\left[\begin{array}{c@{\hspace{1pt}}c@{\hspace{1pt}}cc@{\hspace{1pt}}c@{\hspace{1pt}}cc@{\hspace{1pt}}c@{\hspace{1pt}}cc@{\hspace{1pt}}c@{\hspace{1pt}}c}
         \frac{-2\left(x_{\text{VA}}^1-x_{\text{ta}}^i\right)} {\eta_i^{1}c_0} & 
         \frac{-2\left(x_{\text{VA}}^2-x_{\text{ta}}^i\right)} {\eta_i^{2}c_0} &
         \cdots &
         \frac{-2\left(x_{\text{VA}}^L-x_{\text{ta}}^i\right)} {\eta_i^{L}c_0} &
         \frac{\epsilon_i^{1}}{-(\eta_i^{1})^3} &
         \frac{\epsilon_i^{2}}{-(\eta_i^{2})^3} &
         \cdots&
         \frac{\epsilon_i^{L}}{-(\eta_i^{L})^3} &
         \frac{\Xi_i^{1}\left(y_{\text{ta}}^i-y_{\text{VA}}^1\right)}{(\eta_i^{1})^2}&
         \frac{\Xi_i^{2}\left(y_{\text{ta}}^i-y_{\text{VA}}^2\right)}{(\eta_i^{2})^2}&
         \cdots&
         \frac{\Xi_i^{L}\left(y_{\text{ta}}^i-y_{\text{VA}}^L\right)}{(\eta_i^{L})^2} 

         \\ \frac{-2\left(y_{\text{VA}}^1-y_{\text{ta}}^i\right)}{\eta_i^{1}c_0}&
         \frac{-2\left(y_{\text{VA}}^2-y_{\text{ta}}^i\right)}{\eta_i^{2}c_0}&
         \cdots&
         \frac{-2\left(y_{\text{VA}}^L-y_{\text{ta}}^i\right)}{\eta_i^{L}c_0}&
         \frac{\left(x_{\text{ta}}^i-x_{\text{VA}}^1\right)^{2}}{\left(\eta_i^{1}\right)^{3}}&
         \frac{\left(x_{\text{ta}}^i-x_{\text{VA}}^2\right)^{2}}{\left(\eta_i^{2}\right)^{3}}&
         \cdots&
         \frac{\left(x_{\text{ta}}^i-x_{\text{VA}}^L\right)^{2}}{\left(\eta_i^{L}\right)^{3}}&
         \frac{-\Xi_i^{1}}{\left(x_{\text{ta}}^i-x_{\text{VA}}^1\right)}&
         \frac{-\Xi_i^{2}}{\left(x_{\text{ta}}^i-x_{\text{VA}}^2\right)}&
         \cdots&
         \frac{-\Xi_i^{L}}{\left(x_{\text{ta}}^i-x_{\text{VA}}^L\right)}\\
        \end{array}\right],
    \end{aligned}    
    \end{equation}
\end{figure*}
\end{proof}

\subsection{CRLB of Absolute Velocity Estimation}
Assume that the estimated vector parameter is $\mathbf{t}=\left[\left|\vec{v}_i\right|,\theta_{\text{v}}^i\right]^\text{T}$, 
and the CRLB of absolute velocity estimation is revealed in \hyperref[theo2]{Theorem 2}.

\begin{theorem}\label{theo2}
\textit{Similar to (\ref{eq31}), 
the CRLB of absolute velocity estimation is represented as 
\begin{equation}\label{eq48}
    \text{CRLB}_\mathbf{t}=\left[\mathbf{T}\mathbf{F}(\mathbf{c})\mathbf{T}^\text{T}\right]^{-1},
\end{equation}
where $\mathbf{T}\in\mathbb{R}^{2\times 3L}$ is a Jacobian matrix expressed in (\ref{eq51}) shown at the bottom of the next page.
Similar to the CRLB of localization, 
the CRLB of absolute velocity also changes with the locations and absolute velocities of targets, 
as well as the locations of VAs. } 
\end{theorem}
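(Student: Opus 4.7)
The plan is to mirror the structure of the proof of \hyperref[theo1]{Theorem 1}. Because the received signal model \eqref{eq32}, the log-likelihood \eqref{eq34}, and hence the FIM $\mathbf{F}(\mathbf{c})$ in \eqref{eq38} are functions only of the physical parameter vector $\mathbf{c}=[\mathbf{e},\mathbf{g},\mathbf{h}]$, the FIM carries over verbatim from \hyperref[theo1]{Theorem 1} regardless of whether we subsequently reparameterize in terms of position $\mathbf{p}$ or absolute velocity $\mathbf{t}$. The only new work is therefore to construct the Jacobian $\mathbf{T}\in\mathbb{R}^{2\times 3L}$ whose $(i,j)$-th entry is $\partial c_j/\partial t_i$, and then invoke the chain-rule transformation \eqref{eq29}.

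For the Jacobian itself, the structure follows directly by inspecting the physical relations \eqref{eq46}: $\tau_i^{l'}$ and $\sin(\theta_{i,\text{VA}}^{l'})$ depend only on the target location and the VA location, not on $|\vec{v}_i|$ or $\theta_\text{v}^i$, so their partial derivatives with respect to either velocity parameter vanish. Consequently, the first $2L$ columns of $\mathbf{T}$ (those corresponding to the delay block $\mathbf{e}$ and the spatial-sine block $\mathbf{g}$) are identically zero. The only non-trivial entries arise from differentiating the Doppler expression in \eqref{eq46}, giving
\begin{equation}
\frac{\partial f_{\text{D},i}^{l'}}{\partial |\vec{v}_i|}=\frac{-2f_\text{c}}{c_0}\cos\!\left(\theta_{i,\text{VA}}^{l'}-\theta_\text{v}^i\right),
\end{equation}
\begin{equation}
\frac{\partial f_{\text{D},i}^{l'}}{\partial \theta_\text{v}^i}=\frac{-2f_\text{c}|\vec{v}_i|}{c_0}\sin\!\left(\theta_{i,\text{VA}}^{l'}-\theta_\text{v}^i\right),
\end{equation}
for $l'=1,\ldots,L$. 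Stacking these expressions into the final $L$ columns of $\mathbf{T}$ furnishes the explicit Jacobian promised in \eqref{eq51}, and direct substitution of $\mathbf{T}$ together with $\mathbf{F}(\mathbf{c})$ into \eqref{eq29} yields the claimed CRLB \eqref{eq48}. The dependence on target position and VA locations asserted in the theorem statement is inherited through the AoAs $\theta_{i,\text{VA}}^{l'}$ that appear inside the trigonometric factors, while the dependence on absolute velocity enters through $|\vec{v}_i|$ and $\theta_\text{v}^i$.

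The main obstacle I anticipate is certifying that $\mathbf{T}\mathbf{F}(\mathbf{c})\mathbf{T}^\text{T}$ is non-singular so that the inverse in \eqref{eq48} is well-defined. Exploiting the block-zero structure of $\mathbf{T}$, the product telescopes to $\mathbf{T}_h\mathbf{R}\mathbf{T}_h^\text{T}$, where $\mathbf{T}_h$ is the non-zero $2\times L$ Doppler block of the Jacobian and $\mathbf{R}$ is the diagonal Doppler sub-FIM with strictly positive entries given in \eqref{eq44}. Invertibility therefore reduces to linear independence of the two rows of $\mathbf{T}_h$, which is equivalent to requiring that the quantities $|\vec{v}_i|\tan(\theta_{i,\text{VA}}^{l'}-\theta_\text{v}^i)$ are not all equal across $l'$; this holds as soon as at least two VAs present distinct bistatic AoAs relative to the target, a mild geometric diversity condition that is inherent to the MPC-enhanced sensing setup and should be stated alongside the closed-form expression for completeness.
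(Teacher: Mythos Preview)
Your high-level plan coincides with the paper's: reuse the FIM $\mathbf{F}(\mathbf{c})$ from \hyperref[theo1]{Theorem~1}, replace only the Jacobian, and invoke \eqref{eq29}. The paper's proof says exactly this in two sentences and then simply asserts that $\mathbf{T}$ is given by \eqref{eq51}; your invertibility discussion is additional analysis the paper does not carry out.

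The substantive discrepancy is in the Jacobian itself. You argue that, because $\tau_i^{l'}$ and $\sin(\theta_{i,\text{VA}}^{l'})$ in \eqref{eq46} depend only on target and VA positions, the first $2L$ columns of $\mathbf{T}$ (the delay and angle blocks) vanish, so that $\mathbf{T}\mathbf{F}(\mathbf{c})\mathbf{T}^{\text T}$ collapses to $\mathbf{T}_h\mathbf{R}\mathbf{T}_h^{\text T}$. The paper's \eqref{eq51}, however, populates \emph{all} $3L$ columns with non-trivial entries: the angle block carries terms such as $\cos(\gamma_i^{l'})$ and $\cos^2(\gamma_i^{l'}-\theta_\text{v}^i)/[\,|\vec{v}_i|\sin(\gamma_i^{l'}-\theta_\text{v}^i)\,]$, and the delay block involves $\epsilon_i^{l'}/(\eta_i^{l'}c_0)$. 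Your Doppler-block derivatives do match the last $L$ columns of \eqref{eq51}, but the zero-block claim contradicts the specific matrix the theorem references. Since the statement explicitly says $\mathbf{T}$ is ``expressed in \eqref{eq51}'', your derivation does not establish the result as written, even though your partial-derivative reasoning is consistent with differentiating each line of \eqref{eq46} independently. To reproduce \eqref{eq51} you would have to treat the three relations in \eqref{eq46} as a coupled implicit system linking $(\tau_i^{l'},\sin\theta_{i,\text{VA}}^{l'},f_{\text{D},i}^{l'})$ jointly to $(|\vec v_i|,\theta_\text{v}^i)$ rather than differentiating term-by-term; at minimum you should flag this disagreement and explain which interpretation you are adopting.
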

\begin{proof}
The CRLB for absolute velocity estimation follows the same process as localization, 
differing only in the separate Jacobian matrices for $\mathbf{t}$ and $\mathbf{p}$. 
Thus, the CRLB of absolute velocity is derived from the Jacobian of $\mathbf{t}$ alone.

For the estimate vector $\mathbf{t}$ and $\mathbf{v}$, 
the relationship in (\ref{eq46}) holds.
Therefore, based on (\ref{eq29}) and (\ref{eq46}), 
the Jacobian matrix $\mathbf{T}$ of $\mathbf{t}$ is derived and detailed in (\ref{eq51}).
\renewcommand{\arraystretch}{1.8} 
\begin{figure*}[b]
    \begin{equation}\label{eq51}
    \begin{aligned}
        \mathbf{T}
=\left[
\begin{array}{c@{\hspace{1pt}}c@{\hspace{1pt}}cc@{\hspace{1pt}}c@{\hspace{1pt}}cc@{\hspace{1pt}}c@{\hspace{1pt}}cc@{\hspace{1pt}}c@{\hspace{1pt}}c}
\frac{\epsilon_i^1f_{\text{D},i}^1}{\eta_i^1c_0\Xi_i^1}&\frac{\epsilon_i^2f_{\text{D},i}^2}{\eta_i^2c_0\Xi_i^2}&\cdots&\frac{\epsilon_i^Lf_{\text{D},i}^L}{\eta_i^Lc_0\Xi_i^L}&
\frac{\cos^2\left(\gamma_i^1-\theta_\text{v}^i\right)}{\left|\vec{v}_i\right|\sin\left(\gamma_i^1-\theta_\text{v}^i\right)}&\frac{\cos^2\left(\gamma_i^2-\theta_\text{v}^i\right)}{\left|\vec{v}_i\right|\sin\left(\gamma_i^2-\theta_\text{v}^i\right)}&\cdots&\frac{\cos^2\left(\gamma_i^L-\theta_\text{v}^i\right)}{\left|\vec{v}_i\right|\sin\left(\gamma_i^L-\theta_\text{v}^i\right)}&
\frac{\cos\left(\gamma_i^1-\theta_\text{v}^i\right)}{\frac{1}{-2f_\text{c}}c_0}&
\frac{\cos\left(\gamma_i^2-\theta_\text{v}^i\right)}{\frac{1}{-2f_\text{c}}c_0}&\cdots&
\frac{\cos\left(\gamma_i^L-\theta_\text{v}^i\right)}{\frac{1}{-2f_\text{c}}c_0} \\
\frac{\epsilon_i^1}{2\eta_i^1c_0}&
\frac{\epsilon_i^2}{2\eta_i^2c_0}&\cdots&
\frac{\epsilon_i^L}{2\eta_i^Lc_0}&
\cos\left(\gamma_i^1\right)&
\cos\left(\gamma_i^2\right)&\cdots&
\cos\left(\gamma_i^L\right)&
\frac{\sin\left(\gamma_i^1-\theta_\text{v}^i\right)}{\frac{1}{-2f_\text{c}\left|\vec{v}_i\right|}c_0}&
\frac{\sin\left(\gamma_i^2-\theta_\text{v}^i\right)}{\frac{1}{-2f_\text{c}\left|\vec{v}_i\right|}c_0}&\cdots&
\frac{\sin\left(\gamma_i^L-\theta_\text{v}^i\right)}{\frac{1}{-2f_\text{c}\left|\vec{v}_i\right|}c_0}
\end{array}\right],
    \end{aligned}
\end{equation}
\end{figure*}
\end{proof}


\section{Simulation}\label{se5}

In this section, 
the proposed SL-MPS scheme is simulated to verify the superiority and feasibility of target localization and absolute velocity estimation. 
The simulation results are obtained with 5000 times Monte Carlo simulations. 
The global parameter setting is listed as follows. 

The carrier frequency is $f_\text{c} = 3.5$ GHz\cite{3gpp2018nr}, 
and subcarrier spacing is $\Delta f = 30$ kHz~\cite{3gpp2018nr}. 
$N_\text{c}$, $M_\text{sym}$, $N_\text{R}$, and $N_\text{T}$ are set to $1024$, $140$, $128$, and $128$, respectively~\cite{liu2024carrier}.
The coordinates of targets are (100, 50) m, (100, 30) m, and (100, 10) m, respectively. 
The absolute velocities of targets are [150 m/s, 27.6°], [140 m/s, 25.7°], and [130 m/s, 24°], respectively.

\subsection{Proposed TP-ALS algorithm}\label{se5-A}
We simulate the normalized mean-squared-error (NMSE) of the proposed TP-ALS algorithm with ALS with random initialization (R-ALS)~\cite{gong2024}, ALS with SVD-based initialization (SVD-ALS)~\cite{gong2024}, R-ALS with linear search (R-ALS+LS)~\cite{kolda2009tensor},
SVD-ALS with linear search (SVD-ALS+LS)~\cite{kolda2009tensor}, and ALS with deep-learning-based initialization (DL-ALS)~\cite{ gong2024} as the benchmark schemes.

Fig.~\ref{fig6} shows that the proposed TP-ALS algorithm has faster convergence, smaller reconstruction error and higher decomposition accuracy at the same SNR.

\begin{figure}[htbp]
	\centering
	\subfigure[NMSE of different algorithms at SNR = 0 dB, varying with the number of iterations] {\label{fig6.a}\includegraphics[width=.31\textwidth]{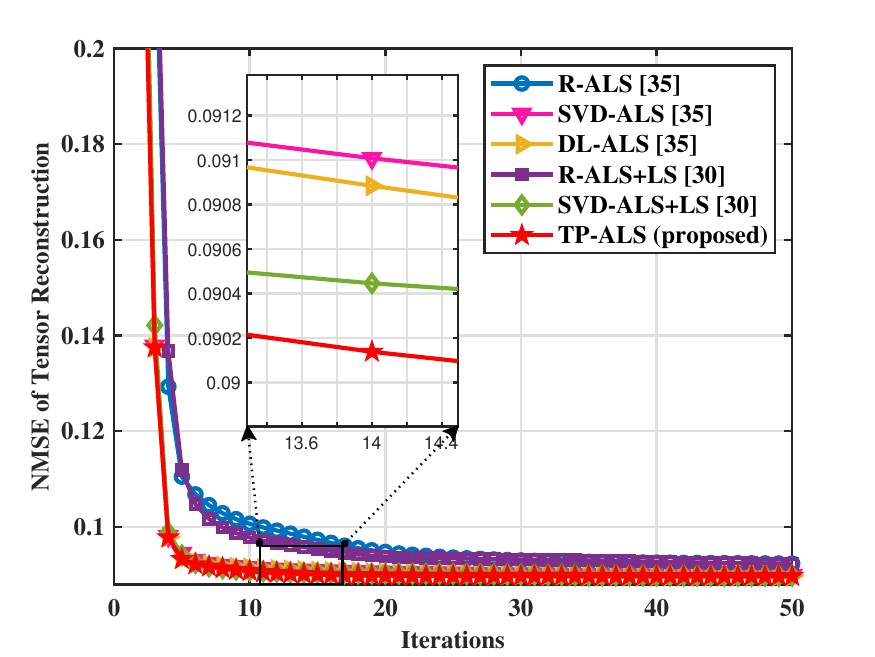}}
	\subfigure[NMSE of different algorithms at iterations = 50, varying with the SNR] {\label{fig6.b}\includegraphics[width=.31\textwidth]{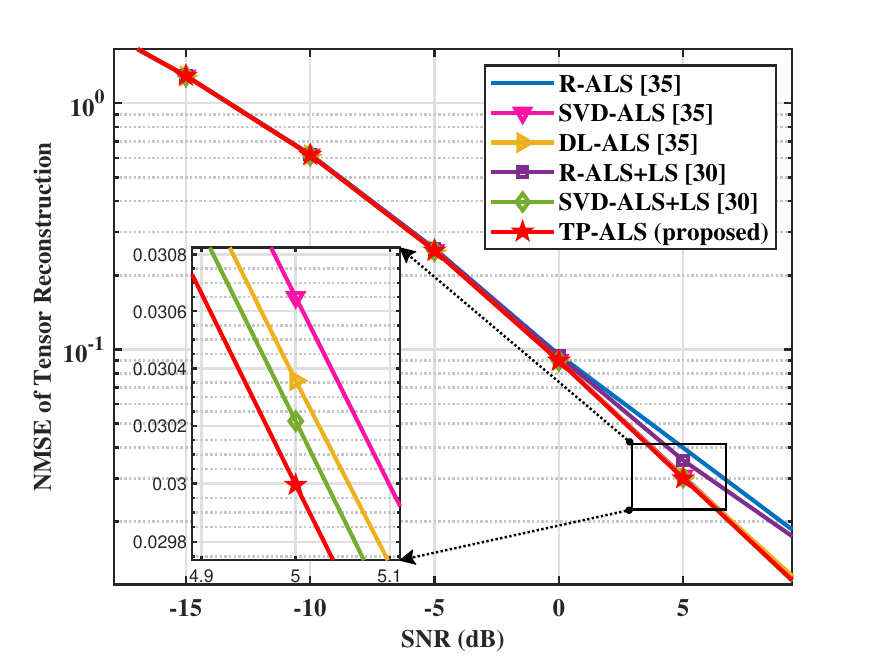}}
	\caption{NMSE comparison of tensor reconstruction for different algorithms}
	\label{fig6}
\end{figure}

\subsection{Sensing Performance}
First, the CRLB is used to evaluate the theoretical sensing performance of the proposed SL-MPS scheme, 
while the average RMSE (ARMSE) is defined to characterize the actual sensing performance. 

\subsubsection{Analysis of CRLB}
Fig. \ref{fig8} demonstrates the CRLBs of location and absolute velocity estimations with varying subcarrier spacing. 
As subcarrier spacing increases, 
the CRLB for location estimation decreases, 
while the CRLB for velocity estimation increases. 
This is because a larger subcarrier spacing increases signal bandwidth, 
improving location accuracy, 
but reduces signal duration, decreasing velocity accuracy. 
Thus, a trade-off exists in selecting the parameter of subcarrier spacing.

Fig.~\ref{fig9} illustrates the CRLBs of location and absolute velocity estimations with varying $L$. 
The $L=0$ refers to the traditional ISAC sensing scheme without the enhancement from MPCs. 
As $L$ increases,
the CRLBs of both location and absolute velocity estimation decrease,
indicating that MPC information can theoretically enhance the sensing accuracy.
\begin{figure}
    \centering
\includegraphics[width=0.30\textwidth]{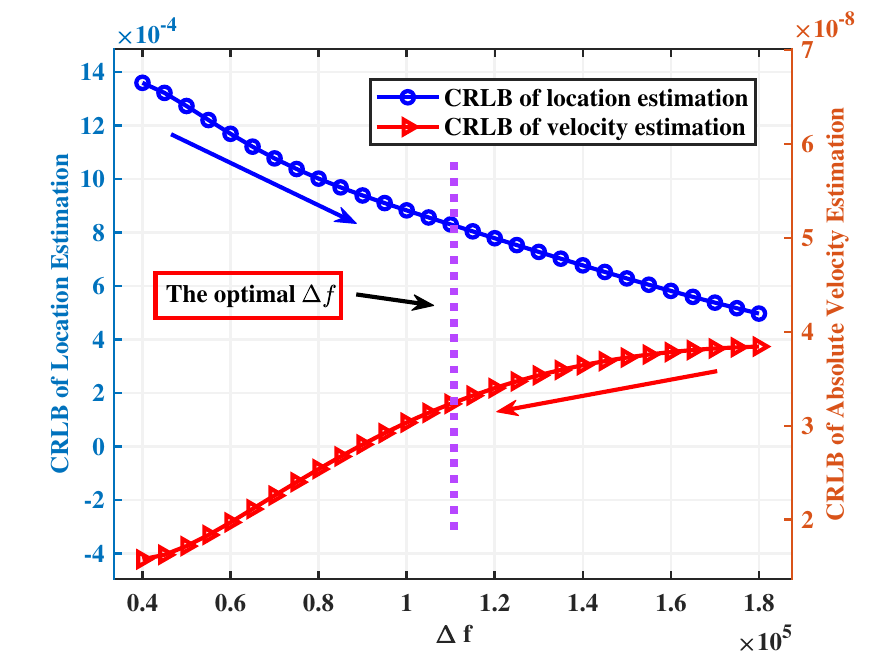}
    \caption{CRLBs of location and velocity estimations with SNR= -10 dB}
    \label{fig8}
\end{figure}

\begin{figure}[htbp]
	\centering
	\subfigure[Location estimation] {\label{fig9.a}\includegraphics[width=.24\textwidth]{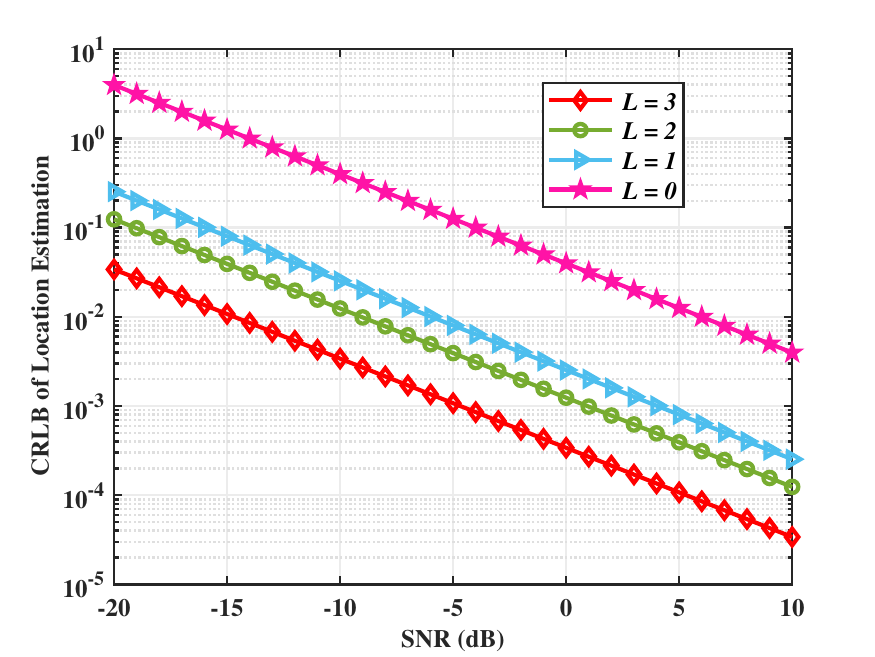}}
	\subfigure[Absolute velocity estimation] {\label{fig9.b}\includegraphics[width=.24\textwidth]{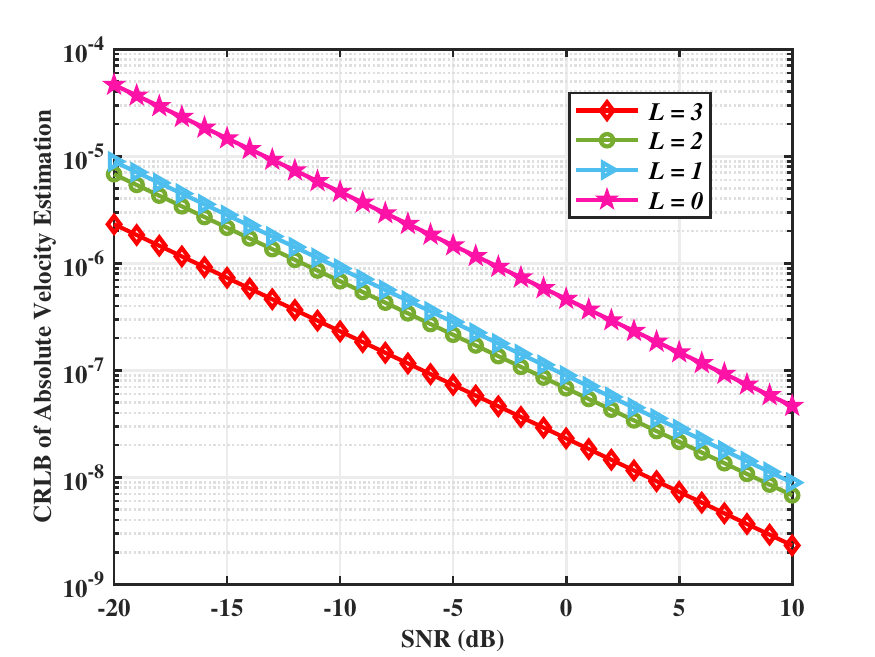}}
	\caption{CRLBs of location and absolute velocity estimations}
	\label{fig9}
\end{figure}

\subsubsection{Sensing profiles}
To validate the feasibility of the proposed SL-MPS scheme, 
we first simulate the sensing profiles for target localization and absolute velocity estimation with different $L$. 
Since the multi-target sensing problem is decomposed into $L$ single-target problems in Section~\ref{sce3-B}, 
we confine our simulation to a single target without loss of generality.

Figs. \ref{fig10} and \ref{fig11} illustrate the sensing profiles with different $L$.
When $L=0$, the profile lacks a distinct peak due to only one LoS MPC. 
As $L$ increases, the energy of the profile peak becomes more concentrated, 
enhancing sensing accuracy.
\begin{figure}[htbp]
	\centering
	\subfigure[$L = 0$] {\label{fig10.a}\includegraphics[width=.22\textwidth]{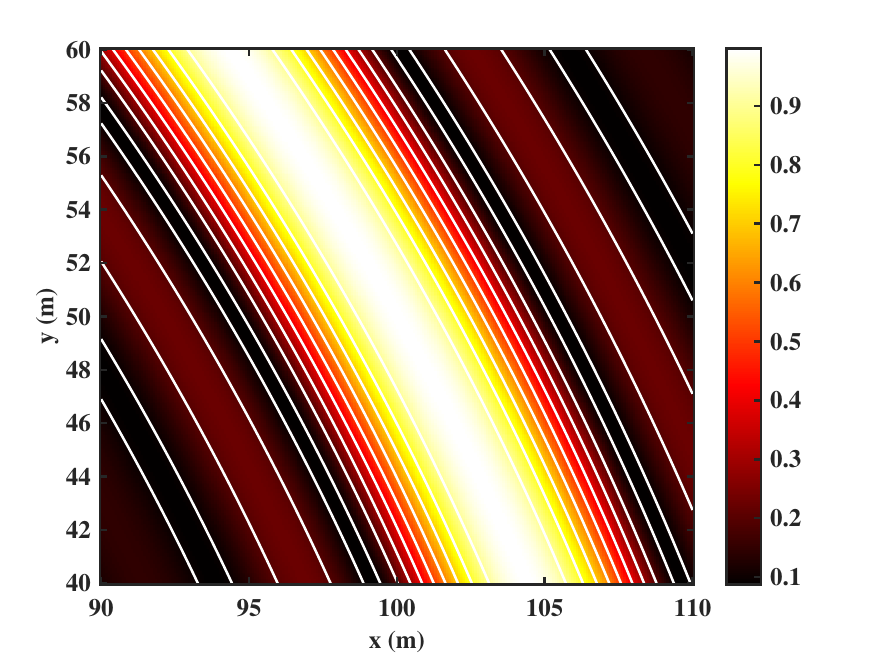}}
	\subfigure[$L = 1$] {\label{fig10.b}\includegraphics[width=.22\textwidth]{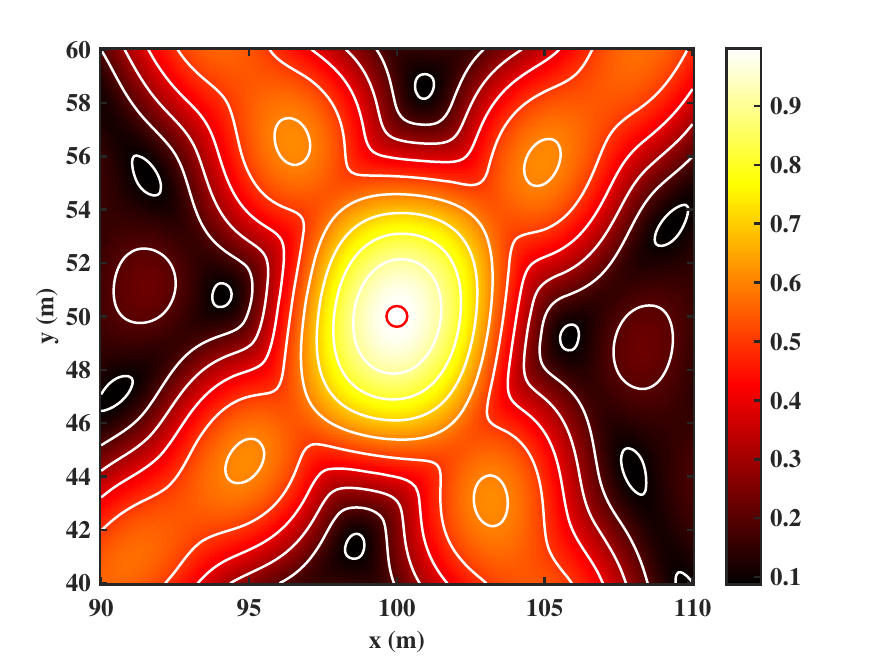}}
    \subfigure[$L = 2$] {\label{fig10.c}\includegraphics[width=.22\textwidth]{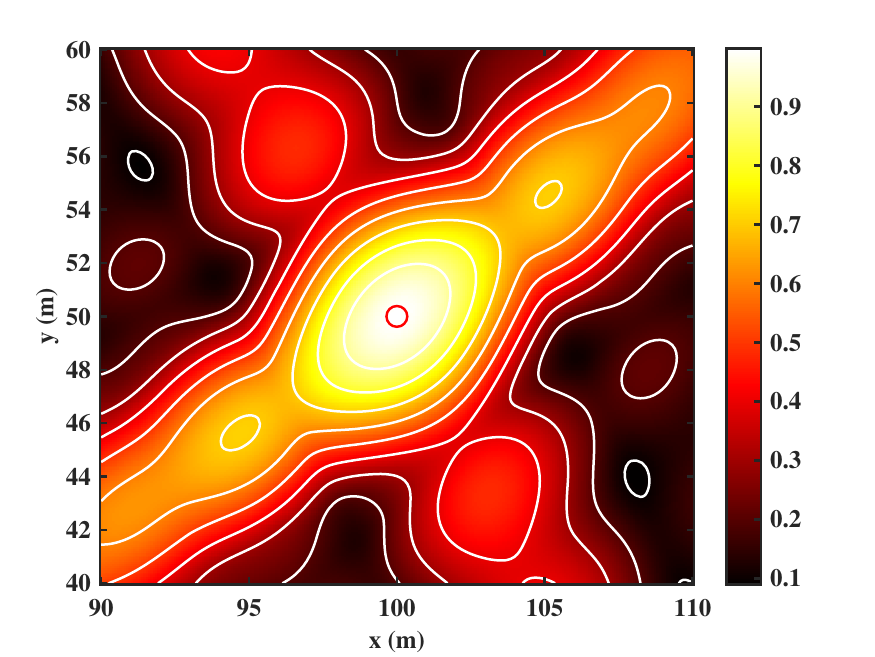}}
    \subfigure[$L = 3$] {\label{fig10.d}\includegraphics[width=.22\textwidth]{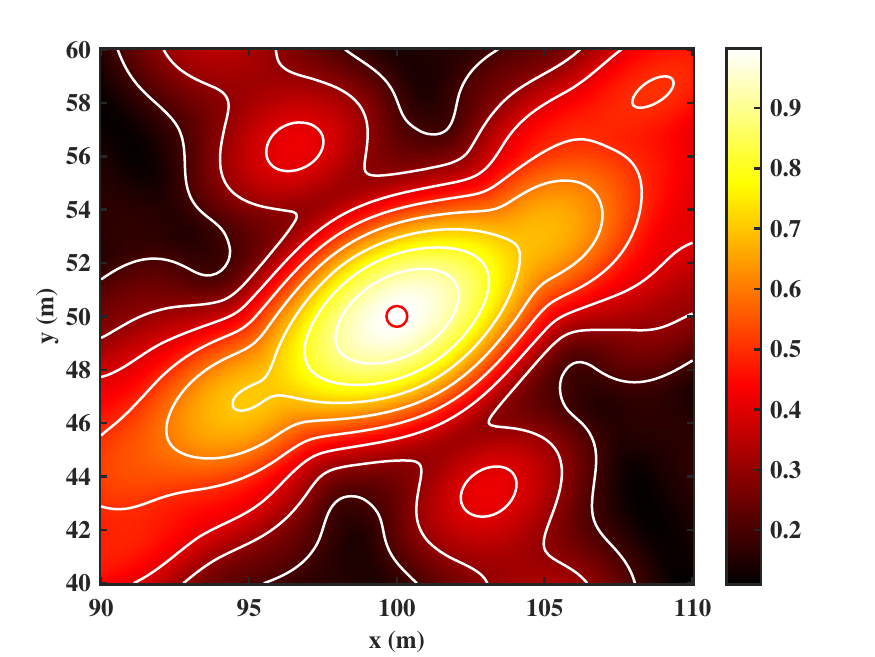}}
	\caption{localization results with different MPC numbers in SNR = -5 dB}
	\label{fig10}
\end{figure}

\begin{figure}[htbp]
	\centering
	\subfigure[$L = 0$] {\label{fig11.a}\includegraphics[width=.24\textwidth]{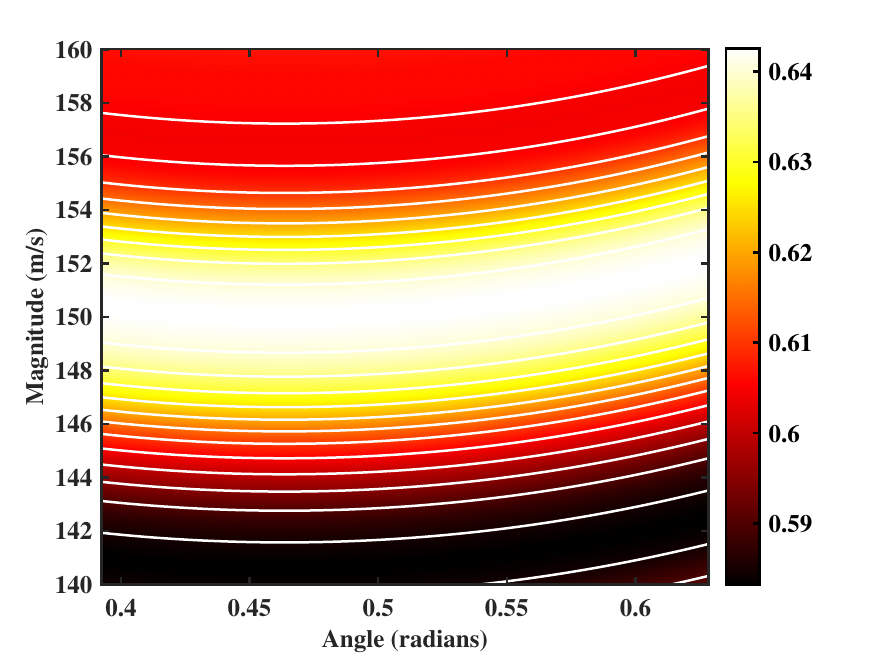}}
	\subfigure[$L = 1$] {\label{fig11.b}\includegraphics[width=.24\textwidth]{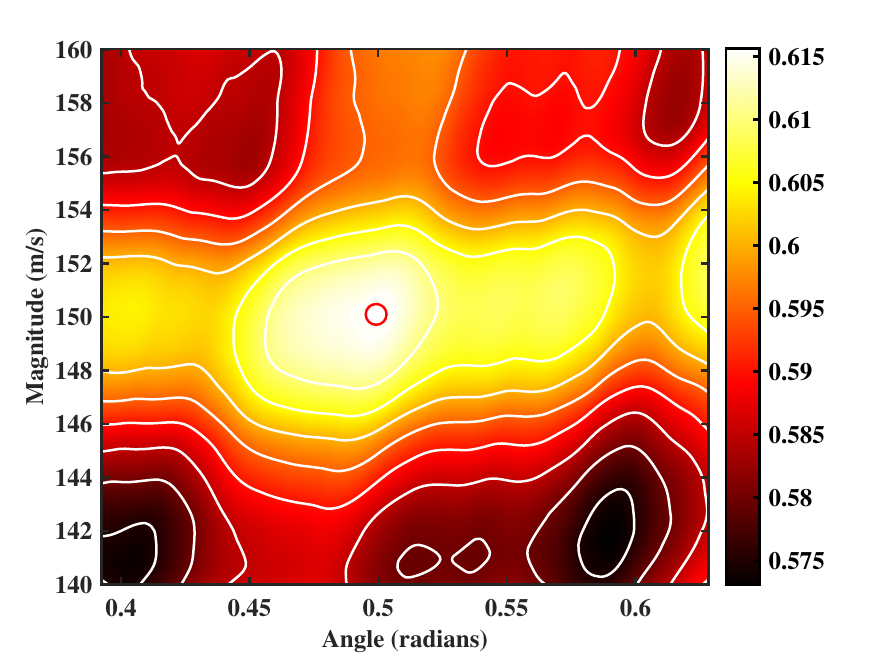}}
    \subfigure[$L = 2$] {\label{fig11.c}\includegraphics[width=.24\textwidth]{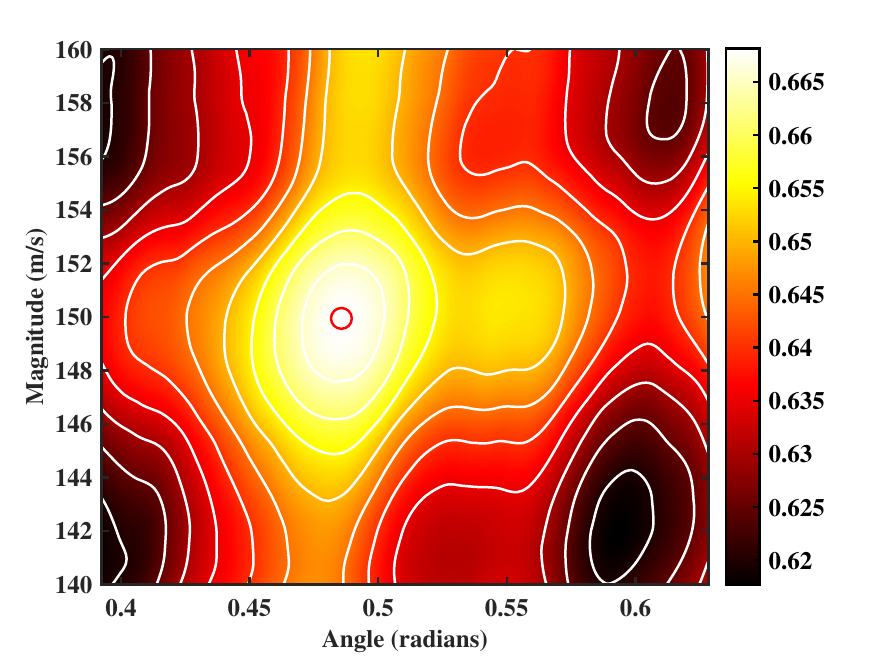}}
    \subfigure[$L = 3$] {\label{fig11.d}\includegraphics[width=.24\textwidth]{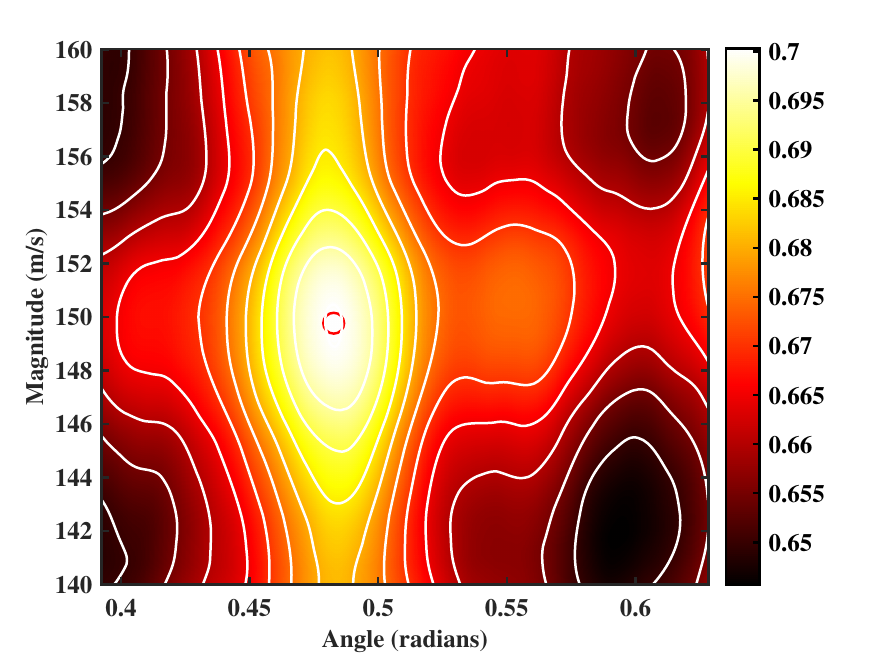}}
	\caption{Absolute velocity estimation results with different MPC numbers in SNR = -5 dB}
	\label{fig11}
\end{figure}

\subsubsection{Comparison of ARMSE}
Fig.~\ref{fig12} demonstrates the ARMSEs of location and absolute velocity estimations for the proposed SL-MPS scheme, 
with the existing scheme in \cite{gong2022multipath} as a benchmark.
To ensure fairness, 
the schemes differ only in signal processing after matching, 
emphasizing the differences between symbol-level and data-level fusion.

Upon observing Fig.~\ref{fig12}, 
the centimeter-level localization and absolute velocity estimation with errors below 0.1 m/s are achieved when the SNR exceeds 9 dB, 
demonstrating the feasibility, robustness, and high accuracy of the proposed SL-MPS scheme.

Besides, for both location and absolute velocity estimations, 
the proposed scheme achieves lower ARMSEs compared to the existing scheme, 
due to the SNR gain from symbol-level fusion, 
demonstrating the superiority of the proposed SL-MPS scheme in both location and absolute velocity estimation.
\begin{figure}[htbp]
	\centering
	\subfigure[Location estimation] {\label{fig12.a}\includegraphics[width=.30\textwidth]{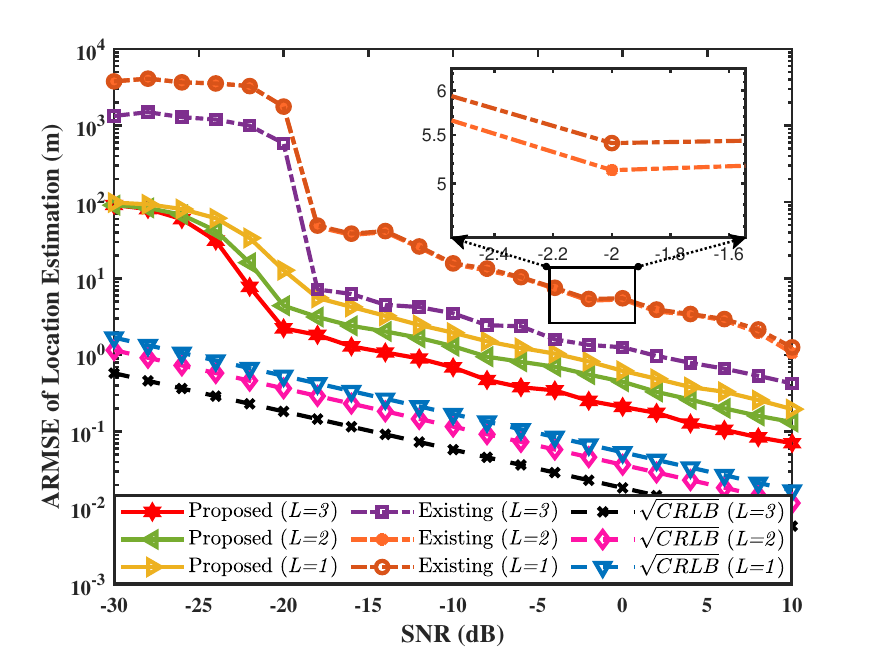}}
	\subfigure[Velocity estimation] {\label{fig12.b}\includegraphics[width=.30\textwidth]{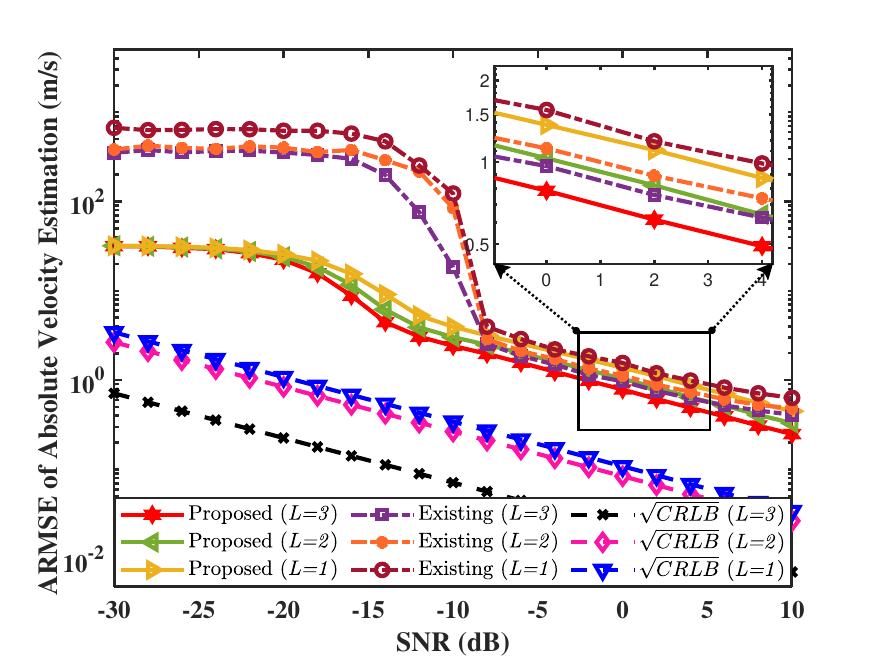}}
	\caption{Comparison of ARMSEs for location and velocity estimations}
	\label{fig12}
\end{figure}

\subsection{KRST Code and Dual-Function Performance Analysis}
For communication, 
we simulate different cases in Fig. \ref{fig13} to investigate the impact of the KRST code and MPCs on bit error rate (BER). 
The transmission rates under different cases can be summarized as follows:
\begin{itemize}
    \item Cases 6 and 8 achieve the highest transmission rate, with a value of 8 (bit/code length).
    \item Cases 1, 3, 5, and 7 exhibit an intermediate transmission rate, with a value of 4 (bit/code length).
    \item Cases 2 and 4 yield the lowest transmission rate, with a value of 2 (bit/code length).
\end{itemize}

Simulation results in Fig.~\ref{fig13} show that increasing MPCs (e.g., in case 1 and case 3) or the modulation order (e.g., in case 4 and case 7) raises the BER, 
reducing communication reliability. 
Conversely, increasing the code length (e.g., in case 1 and case 2) lowers the BER and improves reliability. 
For the same transmission rate (e.g., in case 1 and case 5), 
different combinations of modulation order and code length lead to varying performance. 
Thus, selecting the appropriate combination of modulation order and code length is crucial for optimizing communication performance.
\begin{figure}
    \centering
\includegraphics[width=0.30\textwidth]{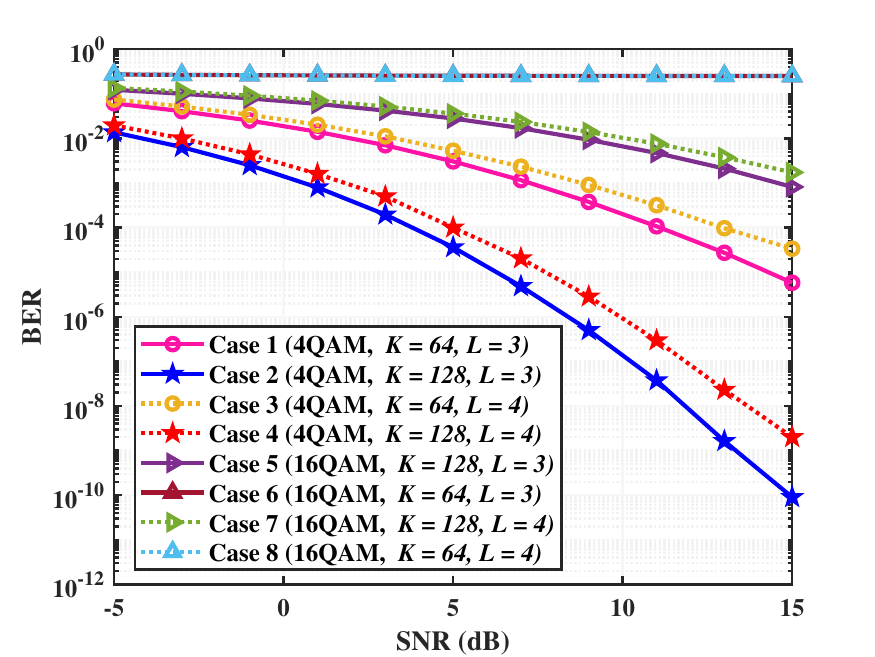}
    \caption{BER for different cases}
    \label{fig13}
\end{figure}

For sensing, 
we simulate the ARMSEs for location and absolute velocity estimations with various code lengths $K$ in Fig.~\ref{fig14}. 
The ARMSEs of both location and absolute velocity estimations decreased as $L$ increased, 
confirming the contribution of the KRST code to sensing performance.

Furthermore, we observe an interesting phenomenon in Fig. \ref{fig14}, 
where the four curves approximately follow an inverse square root function. 
This is because the CRLB is inversely proportional to the SNR, 
implying that the code gain $K$ is related to the ARMSE by an inverse square root relationship. 
Hence, when selecting $K$, 
focusing on the region with a steeper gradient allows smaller increases in $K$ to achieve a significantly lower ARMSE.
\begin{figure}
    \centering
\includegraphics[width=0.30\textwidth]{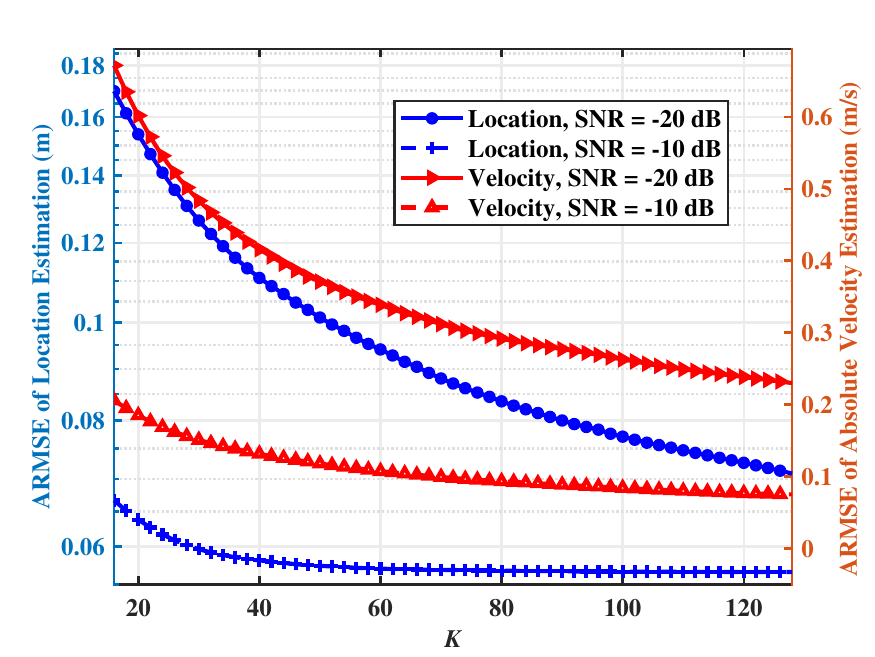}
    \caption{ARMSEs for location and absolute velocity estimations with various $K$}
    \label{fig14}
\end{figure}

Finally, we explore the trade-off between communication and sensing performance for various code lengths $K$ in Fig.~\ref{fig15}. 
Sensing performance is defined as the inverse of the weighted sum of the CRLBs for location and absolute velocity estimations, 
while transmission rate is $\frac{\left(\frac{N_\text{T}}{K}\right)\log_2\left(\varphi \right)}{T\Delta f}$ bps/Hz~\cite{3gpp2018nr}.

According to Fig.~\ref{fig15}, as $K$ increases, 
sensing performance increases while communication performance decreases. 
Therefore, the selection of $K$ can determine a suggested blur interval based on the trend in Fig.~\ref{fig15} 
and the observed phenomenon in Fig.~\ref{fig14}.
\begin{figure}
    \centering
\includegraphics[width=0.30\textwidth]{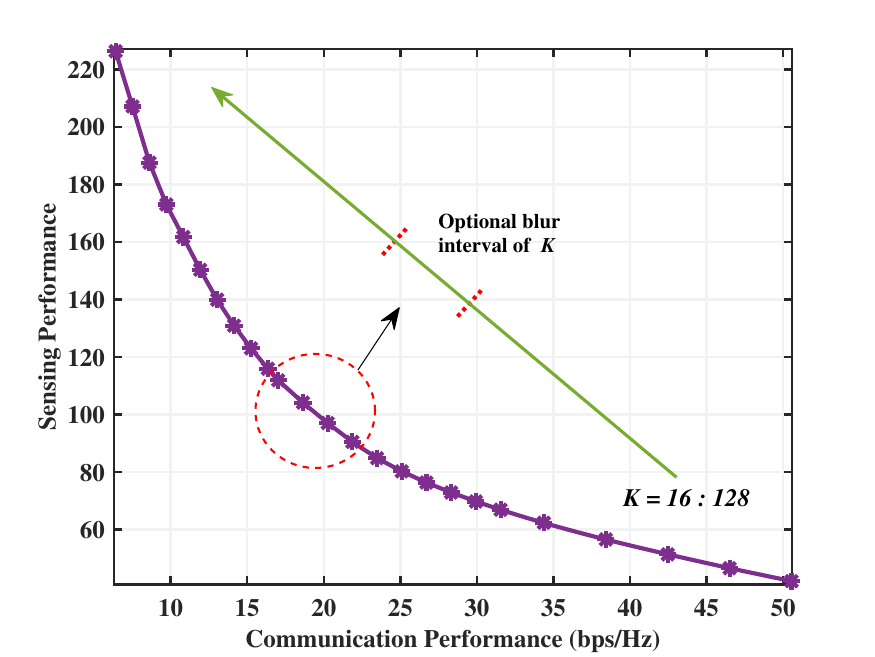}
    \caption{Trade-off performance between communication and sensing with various code length $K$}
    \label{fig15}
\end{figure}

\section{Conclusion}\label{se6}
In this paper, we propose the leverage of MPC information and KRST code in single MIMO-OFDM ISAC BS to realize high-accuracy multiple dynamic targets sensing and multi-user communication. 
Specifically, KRST code is leveraged to mitigate MPC interference in communication while enhancing sensing performance. 
Additionally, we present a novel SL-MPS scheme that significantly improves the localization capability of a single ISAC system and 
empowers a single BS with the new capability of absolute velocity estimation of targets with a single sensing attempt. 
In the SL-MPS scheme, ISAC echo signal is constructed in the form of third-order tensor, 
and a TP-ALS algorithm is proposed to realize MPC separation. 
Through VAs and the matching operation, 
the multiple dynamic target sensing problem is transformed into single-target sensing problems, 
which are solved via proposed localization and absolute velocity estimation methods. 
Furthermore, closed-form CRLBs for localization and absolute velocity with MPC information are derived, 
revealing the contributions of MPC information to dynamic-target sensing. 
Simulation results validate the feasibility of the proposed SL-MPS scheme and demonstrate its superior sensing performance 
compared to state-of-the-art MPC-enhanced schemes.

\begin{appendices}
\section{Exhaustive search algorithm} \label{apex1}
For convenience, 
we assume that BS is the $0$-th VA. 
For the $l$-th VA, the received shuffled range information can be expressed as
\begin{equation}\label{ap1}
\mathbf{R}_{l}=\mathbf{d}_{l}\mathbf{P}_{l}\in\mathbb{R}^{1\times I},
\end{equation}
where $\mathbf{P}_{l}\in\mathbb{R}^{I\times I}$ is a permutation matrix, 
$\mathbf{d}_{l}=\left[d_{l}^1,d_{l}^2,\cdots,d_{l}^I\right]$ is the range vector that satisfies the one-to-one relationship, 
and the $i$-th element of $\mathbf{R}_{l}$ is $\hat{\tau}_{i(l+1)}c_0$. 
The goal of exhaustive search algorithm is to find $\{\mathbf{P}_{l}\}_{l=1}^{L}$, 
and the detailed operations are as follows.

1) Grid the range of interest to obtain $\varpi \times \varpi$ possible target positions, 
    where the $\{q, s\}$ position is represented as $(x_q, y_s)$ with $q, s\in\mathbb{S}=\{1,2,\cdots,\varpi\}$.
    
2) Select $I$ from $\varpi \times \varpi$ possible target positions to form a position vector, 
    where the $\varsigma\in\{1,2,\cdots,\frac{(\varpi \times \varpi)!}{I!(\varpi \times \varpi-I)!}\}$-th position vector is $\mathbf{S}_\varsigma=\left[(x_{a_1}, y_{b_1}), (x_{a_2}, y_{b_2}),\cdots, (x_{a_I}, y_{b_I})\right]$,
    with $a_i, b_i=\text{randint}(1,\varpi)$, $\text{randint}(1,\varpi)$ representing a uniformly distributed random integer generated in the $\left[1, \varpi\right]$, and $(\cdot)!$ denoting the factorial operator.
    
3) For each VA, calculate the distance vector and the $l$-th distance vector is $\mathbf{D}_{l}=\left[d_1^{l},d_2^{l},\cdots,d_I^{l}\right]$, where $d_i^{l}=\sqrt{(x_{a_i}-x_{\text{VA}}^{l})^2+(y_{b_i}-y_{\text{VA}}^{l})^2}$.

4) For each VA, a loss function $\mathbf{L}_{l}$ is denoted by
    \begin{equation}
        \mathbf{L}_{l}=\underset{\sigma\in S_I}{\min}\sum_{i=1}^I|d_{\sigma(i)}^{l}-\mathbf{R}_{l}(i)|,
    \end{equation}
    where $\sigma$ is a permutation of the indices $\{1,2,\cdots,I\}$ and $S_I$ is the set of all possible permutations.
    
5) The final exhaustive search problem is to find the optimal $\mathbf{S}_\varsigma$. The optimal permutation $\sigma_{l}$ for each VA that minimizes the followed total loss function is
    \begin{equation}
        \left(\hat{\mathbf{S}}, \hat{\sigma}_{0},\hat{\sigma}_{1},\cdots,\hat{\sigma}_{L}\right)= \underset{\left(\mathbf{S}, \sigma_{0},\sigma_{1},\cdots,\sigma_{L}\right)}{\text{argmin}}\sum_{l=0}^L\mathbf{L}_{l}.
    \end{equation}

With the $\{\hat{\sigma}_{0},\hat{\sigma}_{1},\cdots,\hat{\sigma}_{L}\}$, 
the one-to-one relationship between $I$ targets and $I$ ranges in each VA can be obtained.

\section{Absolute velocity estimation method of data-level fusion}\label{apex2}
we first obtain the Doppler information 
$\{\hat{f}_{\text{D},i(l+1)}\}_{l=0}^{L}$ 
by performing MUSIC method on the $(i,3)$-th element of matched MPC cluster matrix in each VA and the BS.
Then, the angles $\{\hat{\theta}_{i(l+1)}\}_{l=0}^{L}$ 
are obtained by performing MUSIC method on the $(i,1)$-th element of matched MPC cluster matrix in each VA and the BS. 
According to Fig.~\ref{fig4}, the angle from the 
$i$-th target to the ${l'}$-th VA can be expressed as
\begin{equation} 
\hat{\theta}_{i(l'+1),\text{VA}}=|\hat{\theta}_{i(l'+1)}|+2\phi_{l'}-2\pi,
\end{equation}
and the angle from the $i$-th target to the BS is $\hat{\theta}_{i(0+1),\text{VA}}=\hat{\theta}_{i(0+1)}$.

For the $i$-th moving target with magnitude $|\Vec{v}_i|$ and angle $\theta_\text{v}^i$, 
the estimated Doppler information in $\tilde{\Phi}_\text{VA}^{l'}$ for the $i$-th target can be expressed as
\begin{equation} \label{eq59}
\hat{f}_{\text{D},i(l'+1)}=\frac{-2f_\text{c}|\Vec{v}_i|}{c_0}\left[\cos(\hat{\theta}_{i(l'+1),\text{VA}}-\theta_\text{v}^i)\right],
\end{equation}
and the estimated Doppler information in $\tilde{\Phi}_\text{VA}^{0}$ for the $i$-th target 
is $\hat{f}_{\text{D},i(0+1)}=\frac{-2f_\text{c}|\Vec{v}_i|}{c_0}\left[\cos(\hat{\theta}_{i(0+1),\text{VA}}-\theta_\text{v}^i)\right]$.

The absolute velocity estimation problem of the $i$-th target is transformed into an MLE problem, 
and the estimated error vector is expressed as
\begin{equation}
   {\fontsize{8}{8} \mathbf{v}_\text{ta}=\left[
        \begin{array}{c}
          \hat{f}_{\text{D},i(0+1)}+\frac{2f_\text{c}|\Hat{\Vec{v}}_i^a|}{c_0}\left[\cos(\hat{\theta}_{i(0+1),\text{VA}}-\hat{\theta}_\text{a}^i)\right]    \\
          \hat{f}_{\text{D},i(1+1)}+\frac{2f_\text{c}|\Hat{\Vec{v}}_i^a|}{c_0}\left[\cos(\hat{\theta}_{i(1+1),\text{VA}}-\hat{\theta}_\text{a}^i)\right]     \\
          \vdots \\
          \hat{f}_{\text{D},i(L+1)}+\frac{2f_\text{c}|\Hat{\Vec{v}}_i^a|}{c_0}\left[\cos(\hat{\theta}_{i(L+1),\text{VA}}-\hat{\theta}_\text{a}^i)\right]              
        \end{array}
    \right],}
\end{equation}
and the convex optimization problem is
\begin{equation}
\left[|\Hat{\Vec{v}}_i|,\hat{\theta}_\text{v}^i\right]=\underset{\left[|\Hat{\Vec{v}}_i^a|,\hat{\theta}_\text{a}^i\right]}{\text{argmin}}\sum_{\zeta =1}^{L+1}\|\text{v}_\text{ta}(\zeta )\|_2,
\end{equation}
which is solved by the Broyden-Fletcher-Goldfarb-Shanno (BFGS) algorithm~\cite{liu2023joint}, 
and the estimated absolute velocity of the $i$-th target is $(\hat{\theta}_\text{v}^i,|\Hat{\Vec{v}}_i|)$.

\end{appendices}

\bibliographystyle{IEEEtran}
\bibliography{main}

\end{document}